\newtheorem{theorem}{Theorem}
\newtheorem{proposition}{Proposition}
\newtheorem{definition}{Definition}
\newtheorem{corollary}{Corollary}
\newtheorem{example}{Example}
\newtheorem{lemma}{Lemma}
\newtheorem{remark}{Remark}
\def\Mob{\mathrm{Mob}}
\def\barcalL{{\overline{\mathcal{L}}}}
\def\tr{\mathrm{tr}}
\def\eqdef{:=}
\newcommand{\VPM}{\operatorname{VPM}}
\def\barVPM{{\overline{\VPM}}}
\newcommand{\Sym}{\operatorname{Sym}}
\newcommand{\PD}{\operatorname{PD}}
\newcommand{\PSD}{\operatorname{PSD}}
\renewcommand{\dh}{d_H}
\def\eps{\epsilon}
\def\GL{\mathrm{GL}}  
\def\PGL{\mathrm{PGL}} 
\def\diag{\mathrm{diag}}
\def\ds{\mathrm{d}s}
\def\dQ{\mathrm{d}Q}
\def\inner#1#2{{\langle #1,#2\rangle}}
\def\Inner#1#2{{\left\langle #1,#2\right\rangle}}
\def\Coll{\mathrm{Coll}}
\def\Isom{\mathrm{Isom}}
\def\range{\mathrm{range}}
\def\st{{\ :\ }}
\def\bbR{\mathbb{R}}
\def\bbRpos{\mathbb{R}_{>0}}
\def\mattwotwo#1#2#3#4{{\left[\begin{array}{ll}#1 & #2\cr #3 & #4\end{array}\right]}}
\def\cmattwotwo#1#2#3#4{{\left[\begin{array}{cc}#1 & #2\cr #3 & #4\end{array}\right]}}
\def\xmark{{\it $\times$}}
\title{Hilbert geometry of the symmetric positive-definite bicone: 
Application to the geometry of the extended Gaussian family}
\author{Jacek Karwowski\thanks{Equal contribution.}~\orcidlink{0000-0002-8361-2912}\\
Department of Computer Science\\
University of Oxford, UK
\and
Frank Nielsen\footnotemark[1]~\orcidlink{0000-0001-5728-0726}\\
Sony Computer Science Laboratories Inc.\\
Tokyo, Japan
}
\date{} 
\begin{document}

\maketitle

\begin{abstract}
The extended Gaussian family is the closure of the Gaussian family obtained by completing the  Gaussian family with the counterpart elements induced by  degenerate covariance or degenerate precision matrices, or a mix of both degeneracies. 
The parameter space of the extended Gaussian family forms a  symmetric positive semi-definite matrix bicone, i.e. two partial symmetric positive semi-definite matrix cones joined at their bases.
In this paper, we study the Hilbert geometry of such an open  bounded convex symmetric positive-definite  bicone.
We report the closed-form formula for the corresponding Hilbert metric distance and study exhaustively its invariance properties.
We also touch upon potential applications of this geometry for dealing with extended Gaussian distributions. 
\end{abstract}

\noindent {\bf Keywords}: Extended Gaussian distributions; Symmetric positive semi-definite cone; bicone; Hilbert geometry; projective geometry; invariance; open stochastic systems.

\section{Introduction}

To first motivate our study, we introduce the bicone closed parameter space of the extended Gaussian family in \S\ref{sec:eGaussian}.
Since we shall consider the Hilbert geometry of the open bicone in this work, we give some background of the Hilbert geometry and its related Birkhoff projective geometry, with applications in \S\ref{sec:HG}.
The organization of the paper with our contributions is summarized in \S\ref{sec:contrib}.

\subsection{The extended Gaussian family}\label{sec:eGaussian}

\begin{figure}%
\includegraphics[width=\columnwidth]{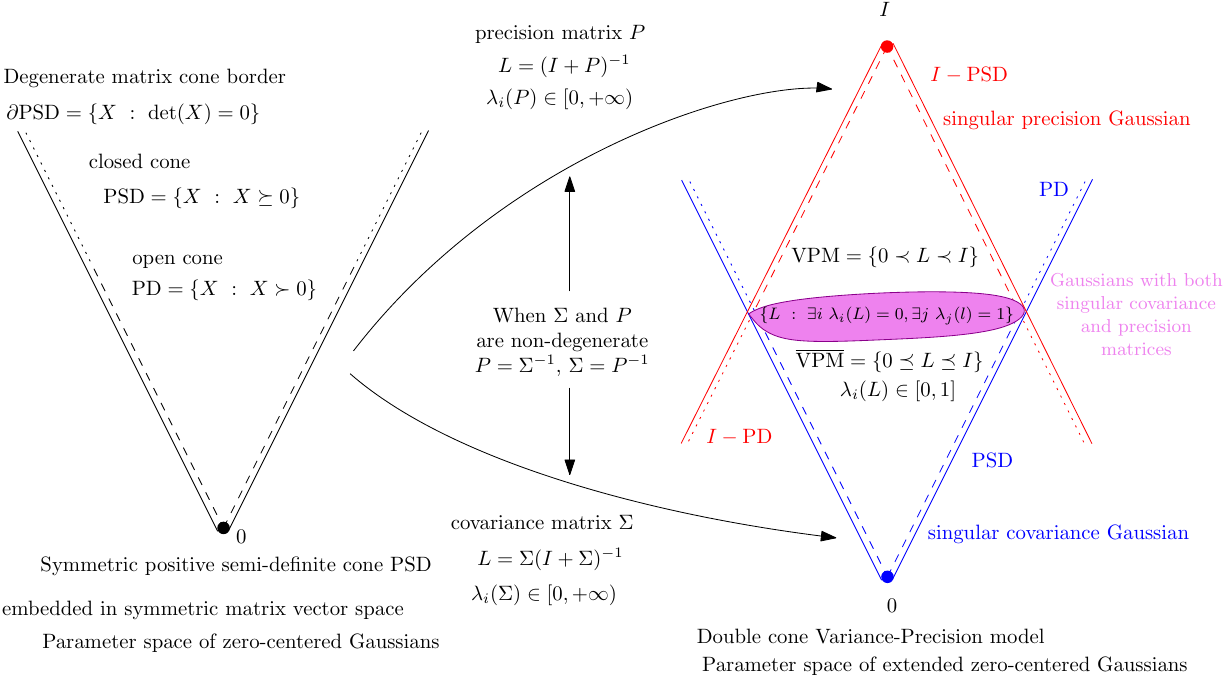}%
\caption{The parameter space of the extended Gaussian family forms a closed double cone.}%
\label{fig:GaussianVPM}%
\end{figure}

Let $N_0(\Sigma)\sim p_{\Sigma}(x)$ denote a $n$-variate Gaussian distribution (normal distribution) centered at the origin with density with respect to the Lebesgue measure given by
$$
p_{\Sigma}(x)= \frac{1}{(2\pi)^{\frac{n}{2}}\, \sqrt{\det(\Sigma)}} \, \exp\left( -\frac{1}{2} \Inner{x}{\Sigma^{-1}x}\right),
$$
where $\inner{x}{x'}=\sum_{i=1}^n x_ix_i'$ denote the Euclidean inner product.
In the following, we assume all Gaussian distributions are centered, and discuss in the perspective section how to consider arbitrary (extended) Gaussian distributions.
 
The parameter space of the Gaussian family $\{N_0(\Sigma)\}$ is the symmetric positive-definite cone (SPD) $\PD=\{X\in\Sym(\bbR,n) \st X\succ 0\}$, where $\succ$ denotes the Loewner partial ordering on the vector space $\Sym(\bbR, n)$ of  $n\times n$ real symmetric matrices (written concisely as $\Sym(n)$ in the remainder): that is, $A\succ B$ if and only if $A-B\succ 0$, i.e., $A-B\in\PD$.
The topologically closed SPD cone is the positive semi-definite (PSD) cone:
$\PSD=\{X\in\Sym(n) \st X\succeq 0\}$ where $A\succeq B$ iff. $A-B$ is PSD, i.e. $A-B\in \PSD$.
A Gaussian $N_0(\Sigma)$ can also be equivalently be parameterized by its precision matrix $P=\Sigma^{-1}$ (also called information matrix~\cite{VarianceInfoMfd-1973}):
 $N_0(\Sigma)=N_0(P)$ with $N_0(P)\sim p_{P}(x)$:
$$
p_{P}(x)= \frac{\sqrt{\det(P)}}{(2\pi)^{\frac{n}{2}}} \, \exp\left( -\frac{1}{2} \Inner{x}{Px}\right).
$$
From the viewpoint of geometry, 
we consider the Gaussian family as a cone manifold equipped with a single global coordinate system (e.g., the covariance or precision coordinate systems which 
form a pair of dual coordinate systems $(\Sigma(\cdot),\PD)$ and  $(P(\cdot),\PD)$   in the setting of information geometry~\cite{GaussianDFS-1996,ohara2019doubly}). We denote by $\PD(n)$ and $\PSD(n)$ the open $n$-dimensional SPD and PSD cones, respectively.

Gaussian distributions with degenerate covariance matrices $\Sigma$ are commonly considered as Gaussian distributions defined on lower-dimensional affine subspaces $\range(\Sigma)=\{\Sigma x \st x\in\bbR^n\}$ of $\bbR^n$. 
James~\cite{VarianceInfoMfd-1973} considered Gaussian distributions with degenerate precision matrices, and more generally proposed to extend the Gaussian family by considering the following reparameterization of Gaussians:
\begin{eqnarray*}
L(\Sigma) &=& \Sigma (I+\Sigma)^{-1},\\
L(P)&=& (I+P)^{-1},
\end{eqnarray*}
and:
\begin{eqnarray*}
\Sigma(L)&=& L(I-L)^{-1},\\
P(L)&=& L^{-1}-I,
\end{eqnarray*}

where $I=I_n$ denotes the $n\times n$ identity matrix.
For full rank covariance/precision matrices, the eigenvalues $\lambda_i(L)$ fall inside $(0,1)$ for $i\in\{1,\ldots,n\}$.
James~\cite{VarianceInfoMfd-1973} first considered the general case of extended Gaussian distributions 
by considering ``Gaussian elements'' (i.e., the counterpart of Gaussians with degenerate covariance and/or precision matrices) lying on the boundary of the bicone\footnote{A word on terminology: A bicone is two cones joined at their bases while a double cone is two cones joined at their apex.}:
that is, extended Gaussians with either degenerate covariance or precision matrices, and Gaussians with both
degenerate covariance and precision matrices.
See Figure~\ref{fig:GaussianVPM} for a schematic illustration.
The variance-precision model $\barVPM$ is the intersection of the two PSD cones and the variance-precision manifold $\VPM$ is the interior of $\barVPM$:
 $\VPM=\barVPM \backslash \partial \barVPM$.
James~\cite{VarianceInfoMfd-1973} called $\VPM$ the variance-information manifold and studied the precision degeneracies of variance-information model $\barVPM$.

The pioneer approach of James was later studied in depth using the framework of category theory in~\cite{Stein-2023}, further highlighting interpretations of the different types of degenerate extended Gaussian distributions and their duality relationships was elucidated.
In a nutshell, Gaussian distributions with degenerate precision matrices encode non-determinism relations in the open stochastic framework of Willems~\cite{willems2012open}, and  play the role of  affine subspace constraints (loosely speaking, those degenerate precision Gaussians can be interpreted as ``uniform distributions'' although the concept of uniform distributions in subspaces with respect to Lebesgue measure does not exist in measure theory).

In 2D, we may visualize the 2D PSD matrix cone $\PSD(2)\in\Sym(2)$ as an equivalent Lorentz cone $\barcalL_3$ (second-order cone having the shape of an ice cream cone) in $\bbR^3$ where
$$
\barcalL_3=\{(t,x,y)\in\bbR^3 \st t\geq \sqrt{x^2+y^2}\}.
$$

The invertible diffeomorphisms $v(Q)\Leftrightarrow Q(v)$ between $\barcalL_3$ and $\VPM$ are given by
$$
v(Q)=\left(t=\frac{a+b}{2},x=\frac{a-b}{2},y=c\right) \in\barcalL_3 
\Leftrightarrow
Q(t,x,y)=\cmattwotwo{t+x}{y}{y}{t-x}=\cmattwotwo{a}{c}{c}{b}
\succeq 0.
$$

Thus $\barVPM(2)$ can be interpreted as the Lorentz bicone obtained by the intersection of $\barcalL_3$ with the reverse translated Lorentz cone $(0,0,1)-\barcalL_3$:
$\barVPM(2) \Leftrightarrow \barcalL_3\cap ((0,0,1)-\barcalL_3)$.
See Figure~\ref{fig:GaussianVPM2} for some renderings of the border of $\barVPM(2)$.\footnote{A video of an animation is available online at \url{https://www.youtube.com/shorts/PEJG8Xtz46c}}

\begin{figure}%
\centering
\begin{tabular}{ccc}
\includegraphics[width=0.28\columnwidth]{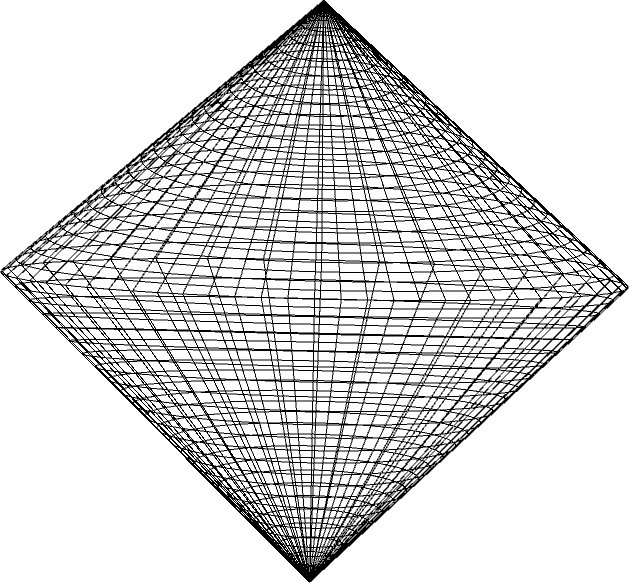}&
\includegraphics[width=0.28\columnwidth]{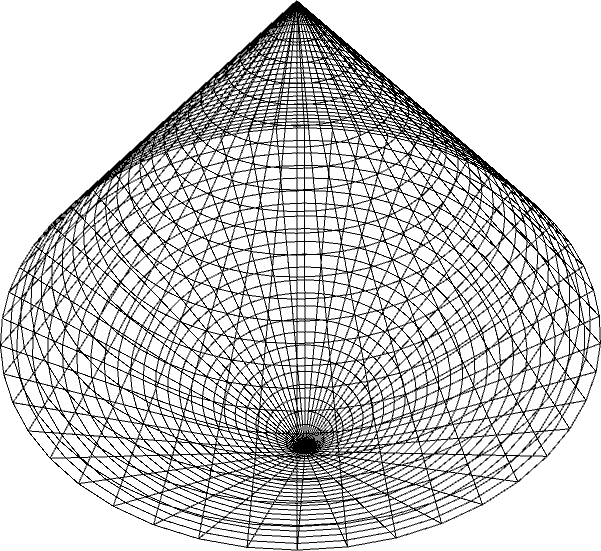}&
\includegraphics[width=0.28\columnwidth]{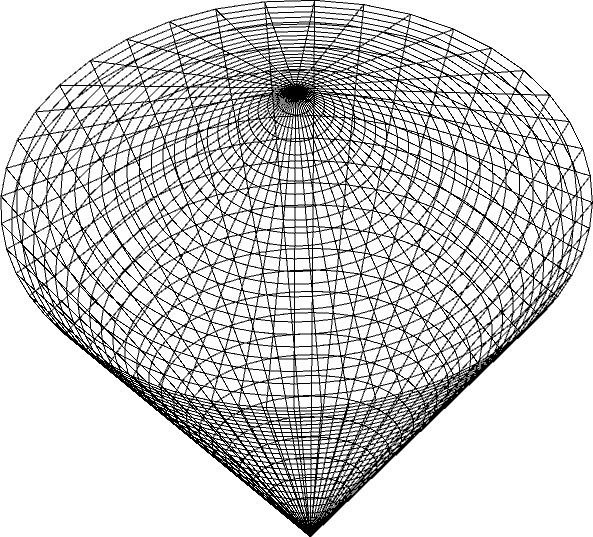}
\end{tabular}

\caption{The parameter space of $\barVPM(2)$ can be viewed as a double Lorentz cone in $\bbR^3$: Three views of the double Lorentz cone.}%
\label{fig:GaussianVPM2}%
\end{figure}

Let us mention that the VPM-Lorentz bicone analogy cannot be extended in general dimension $n$~\cite{fawzi2019representing}.

\subsection{Hilbert geometry}\label{sec:HG}

Hilbert geometry~\cite{Hilbert-1895,goldman2022geometric} is a geometry defined on any open bounded convex domain $C$ of a Hilbert space.
It defines a metric distance $d_H^C$
and ensures that the straight lines are geodesics, with uniqueness of geodesics depending on the smoothness of the boundary $\partial C$ of 
the domain~\cite{nielsen2017balls}.
When clear from context, we shall write $d_H$ instead of $d_H^C$ for sake of brevity.

Birkhoff geometry~\cite{birkhoff1957extensions,MetricProjectiveHilbert-2014} is a geometry defined on any open regular cone $K$ (i.e., convex pointed cone). 
The cone defines a partial order which allows one to define a Birkhoff ``distance'' $d_B$ which is a projective distance as it conceptually measures a distance between rays $r$ and $r'$ of the cone. 
The distance is termed projective because it satisfies the following identity: $d_B(\lambda r,\lambda' r')=d_B(r,r')$ for any $\lambda>0$ and $\lambda'>0$.
In other words, Birkhoff distance is a distance on equivalence classes $[r]$ and $[r']$ of points $r$ and $r'$ where $r \sim \lambda r$ for any $\lambda>0$. 
The Birkhoff distance is also called Hilbert projective distance or Hilbert projective metric  in the literature~\cite{nussbaum1988hilbert}. 
These two metric Hilbert geometry and projective Birkhoff geometry are related to each other when we consider for the Hilbert domain $C=K\cap H$, 
i.e., a cone slice obtained by the restriction of $K$ to any given subspace $H$.

Hilbert geometry for many different open convex domains have been studied:
For example, the Hilbert geometry of the unit disk yields Klein model of hyperbolic geometry~\cite{nielsen2020siegel}.
The Hilbert geometry of simplices have been studied in~\cite{delaharpeHibertsMetricSimplices1993} (and more generally for polytopes in~\cite{lemmensIsometriesPolyhedralHilbert2009}) with applications in machine learning in~\cite{nielsen2018clustering,nielsen2023non,vaneceksupport}. 

Birkhoff projective geometry of the symmetric positive-definite matrices (SPD) found many applications due to the fact that it exhibits a contraction property~\cite{birkhoff1957extensions,chen2021stochastic} and only requires   the extreme eigenvalues of matrices to compute the Birkhoff distance~\cite{mostajeran2024differential}.
In comparison, the affine-invariant Riemannian metric distance~\cite{thanwerdas2023n} (AIRM) requires the full spectral decomposition.
The Hilbert geometry of the space of correlation matrices, whose domains are called elliptopes~\cite{tropp2018simplicial}, was studied in~\cite{nielsen2018clustering}.
Recently, Hilbert geometry gained interests in computational geometry due to its mathematical and computational tractability of their Voronoi diagrams~\cite{DBLP:conf/compgeom/GezalyanM23} and dual Delaunay triangulations~\cite{gezalyan2024delaunay}, or smallest enclosing balls~\cite{nielsen2018clustering,banerjee2024heine}, etc.  

\subsection{Contributions and paper outline}\label{sec:contrib}

In this paper, we consider the Hilbert geometry for the  covariance-precision bicone domain.

The paper is organized as follows: 
In Section~\ref{sec:BHG}, we concisely describe the Birkhoff projective geometry of regular open cones and the Hilbert geometry defined on open bounded convex domains.
The Hilbert geometry of the open variance-precision manifold is studied in \S\ref{sec:HGVPM} with the following results:
\begin{itemize}

\item an equivalent definition of the variance-precision model of James~\cite{VarianceInfoMfd-1973} is proven in Definition~\ref{def:VPM}:
\begin{eqnarray*}
\barVPM(n) &\eqdef&\{X \in \Sym(n) : 0 \leq \lambda_i(X) \leq 1, \forall i\in\{1,\ldots,n\}\},\quad\mbox{\protect\cite{VarianceInfoMfd-1973}}\\
&=& \{X \in \Sym(n) : 0 \preceq X \preceq I\}\quad(\mbox{Remark~\ref{rk:vpmequiv}}).
\end{eqnarray*}

\item the $\VPM$ is proven open convex and bounded in Proposition~\ref{prop:vpm-open-convex},

\item the automorphisms of the VPM bicones are reported in Proposition~\ref{prop:vpm-invariance},

\item a closed-form formula for the Hilbert distance in the VPM is obtained in Theorem~\ref{def:hilbert-dist}.

\end{itemize}

In Section~\ref{sec:invariance}, we report two invariance properties of the Hilbert VPM distance: 
Namely, the invariance under transformations $X \mapsto I - X$ in Proposition~\ref{prop:ImX} and 
the invariance under orthonormal matrix conjugation 
$X \mapsto U^\top\, X\, U$ in Proposition~\ref{prop:invariance-conjugation}.
The following Section~\ref{sec:char} proves that these two isometric transformations characterized all the VPM isometries (Theorem~\ref{thm:isochar}).
Finally, we hint at some future perspectives in~\S\ref{sec:concl}.

\section{Birkhoff and Hilbert geometries}\label{sec:BHG}

\subsection{Cones and partial orders}

\begin{definition}[Cone order]
Given any vector space $V$, if $K \subseteq V$ satisfies the following conditions:
\begin{itemize}
    \item $K$ is an open cone, i.e. for any $v \in K$ and any $c \in \bbRpos$, we have $cv \in K$
    \item $K$ is convex, i.e. for any $v, w \in K$ and any $c \in [0, 1]$, we have $cv + (1-c)w \in K$
    \item $K$ is pointed, i.e. $\overline{K} \cap (-\overline{K}) = \{0\}$ 
\end{itemize}
then $K$ defines a partial order $\prec_{K}$, or simply $\prec$, on elements of $V$, by $v \prec w$ if and only if $w - v \in K$. We write $v \preceq w$ if $w - v \in \overline{K}$.
\end{definition}

An example is the Loewner order $\preceq$ on the set of symmetric matrices $\Sym(n,\bbR)$, arising from the convex open cone of symmetric positive definite matrices $\PD(n) \subset \Sym(n)$.

\begin{definition}[Loewner order]\label{def:loewner}
We define a partial order on the space of symmetric matrices $A \preceq B$ if and only if $0 \preceq B - A$.
\end{definition}

This order can also be characterised in terms of eigenvalues, as below.

\begin{proposition}[Loewner order implies eigenvalues ordering]\label{prop:loewner-eigenvalues}
    Given two symmetric matrices $A, B$ of $V=\Sym(n)$, if $A \preceq B$, then $\lambda_i(A) \leq \lambda_i(B)$, where $\lambda_i(X)$ denotes the $i$-th smallest eigenvalue of $X$ (including multiplicities). The converse holds if the ordering is true elementwise in a common eigenbase.
\end{proposition}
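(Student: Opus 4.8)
The plan is to derive the forward implication from the variational (Courant--Fischer) description of eigenvalues, and the converse from simultaneous diagonalization.

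For the forward direction I would invoke the min--max theorem: for symmetric $X$ with eigenvalues ordered $\lambda_1(X)\le\cdots\le\lambda_n(X)$,
\[
\lambda_i(X)=\min_{\dim S=i}\ \max_{0\neq v\in S}\frac{\inner{v}{Xv}}{\inner{v}{v}},
\]
where $S$ ranges over the $i$-dimensional linear subspaces of $\bbR^n$. Assuming $A\preceq B$, Definition~\ref{def:loewner} gives $B-A\in\PSD(n)$, hence $\inner{v}{(B-A)v}\ge 0$ and therefore $\inner{v}{Bv}\ge\inner{v}{Av}$ for every $v$. Thus the Rayleigh quotient of $B$ dominates that of $A$ pointwise, so for each fixed $i$-dimensional $S$ the inner maximum for $B$ is at least that for $A$; since the outer minimum over $S$ is monotone, taking $\min_S$ preserves the inequality and yields $\lambda_i(A)\le\lambda_i(B)$. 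This is precisely Weyl's monotonicity principle.

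For the converse I read the hypothesis ``the ordering is true elementwise in a common eigenbase'' as: $A$ and $B$ share an orthonormal eigenbasis $u_1,\dots,u_n$ with $Au_j=\alpha_j u_j$, $Bu_j=\beta_j u_j$, and the eigenvalues \emph{paired by eigenvector} satisfy $\alpha_j\le\beta_j$ for each $j$. Writing $A=U D_A U^\top$ and $B=U D_B U^\top$ with $U=[u_1\cdots u_n]$ orthogonal, $D_A=\diag(\alpha_1,\dots,\alpha_n)$ and $D_B=\diag(\beta_1,\dots,\beta_n)$, I would note that $B-A=U(D_B-D_A)U^\top$ with $D_B-D_A=\diag(\beta_1-\alpha_1,\dots,\beta_n-\alpha_n)\succeq 0$; since orthogonal conjugation preserves positive semidefiniteness, $B-A\in\PSD(n)$, i.e.\ $A\preceq B$.

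The forward half is routine once Courant--Fischer is granted, the only care being the ordering of the min and the max and the pointwise domination step. The real subtlety is the converse, and in particular why it must be qualified: plain eigenvalue ordering $\lambda_i(A)\le\lambda_i(B)$ does not force $A\preceq B$ when $A$ and $B$ fail to commute. I expect the main point to get right is pinning down the hypothesis and exhibiting this gap, for instance with the $2\times 2$ pair $A=\diag(1,0)$ and $B=\tfrac12\cmattwotwo{1}{1}{1}{1}$, which share the eigenvalue multiset $\{0,1\}$ yet give $B-A=\cmattwotwo{-\tfrac12}{\tfrac12}{\tfrac12}{\tfrac12}$ with determinant $-\tfrac12<0$, hence indefinite. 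This makes clear that both the common eigenbasis and the eigenvector-paired (rather than sorted) comparison are essential.
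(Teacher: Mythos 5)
Your proof is correct and follows essentially the same route as the paper's: the min--max (Courant--Fischer) theorem for the forward implication and simultaneous diagonalization in a common orthonormal eigenbasis for the converse. Your added counterexample with $A=\diag(1,0)$ and $B=\tfrac12\cmattwotwo{1}{1}{1}{1}$, showing that sorted-eigenvalue ordering alone does not imply the Loewner order, is a worthwhile clarification of why the converse needs the common-eigenbasis hypothesis, but it does not change the underlying argument.
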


\begin{proof}
    For the forward direction, following~\cite{mathoverflowminmax}, assume that $A \preceq B$, that is, $0 \preceq B - A$. Then from the min-max theorem, we have:
    \[
        \lambda_i(A) = \min_{U \subset \mathbb{R}^n; \dim(U) = i}\,\, \max_{x \in U} \frac{v^\top Av}{\|v\|^2_2} \leq \min_{U \subset \mathbb{R}^n; \dim(U) = i}\,\, \max_{x \in U} \frac{v^\top Bv}{\|v\|^2_2}  = \lambda_i(B)
    \]
    
    In the other direction, we assume $A = Q^\top\, \diag(\alpha_1,\ldots, \alpha_n)\, Q$ and $B = Q^\top\, \diag(\beta_1,\ldots, \beta_n)\, Q$ in the common eigenbasis given by $Q$. If $\alpha_i \leq \beta_i$, then we have $(B - A)q_i = (\beta_i - \alpha_i)$, proving the converse.
\end{proof}

\subsection{Birkhoff projective distance}

\begin{definition}[Birkhoff projective distance]
    Given a vector space $V$ and an open convex pointed cone $K \subseteq V$ defining an order $\preceq$ on $V$, we define the Birkhoff distance $d_B$ on $K$ as
		 follows:
    \[
    d_B(v, w) = \log \frac{M(v, w)}{m(v, w)}
    \]
    where: \[
    M(v, w) = \inf_{\lambda > 0}\{v \preceq \lambda w\}
    \qquad m(v, w) = \sup_{\mu > 0}\{\mu w \preceq v\}
    \]
\end{definition}

\begin{example}
Consider the SPD cone $\PD(n)$.
Then the Birkhoff projective distance~\cite{nielsen2018clustering,chen2021stochastic} is
$$
d_B(P, Q)=\log\frac{\lambda_{\max}(PQ^{-1})}{\lambda_{\min}(PQ^{-1})}.
$$
\end{example}

\subsection{Hilbert metric distance}

\begin{definition}[Hilbert distance function]\label{def:hilbert-dist}
    Given an open convex set $X$ in a normed vector space $(V, \|\cdot\|)$ with the boundary $\partial X$, we define the Hilbert distance function $\dh(x, y)$ on elements $x, y \in X$ by the following formula:
    \[
        \dh(x, y) = \log \frac{\|x - y'\|}{\|x' - x\|}\frac{\|x' - y\|}{\|y - y'\|}
    \]
    where $x', x, y, y'$ are four points lying in this order on the line $l_{xy}$, such that $\{x', y'\} = \partial X \cap l_{xy}$.
\end{definition}

\begin{remark}
    Hilbert distance in~\Cref{def:hilbert-dist} is independent of the norm on $V$, by the equivalence of one-dimensional norms to the absolute value $|\cdot|$ on $l_{xy}$.
\end{remark}

\begin{example}
Consider $X=(0,1)$ the open unit interval of $V=\bbR$. 
Then the Hilbert distance is 
$$
d_H(x,y)=\left|\log \frac{(1-x)y}{x(1-y)} \right|.
$$
\end{example}

\subsection{Relationship between Hilbert and Birkhoff distances}

\begin{definition}[Projective space over a pointed cone]\label{prop:birkoff-hilbert-cone}
    For a pointed cone $K$ in a vector space $V$, we define the projective space over $K$ to be the space of half-lines in $K$, that is, $\mathbf{P}(K) = K / \sim$ where $x \sim y$ for $x, y \in K$ if and only if $x = \lambda y$ for some $\lambda \in \mathbb{R}_{>0}$.
\end{definition}

\begin{definition}[Open pointed cone induced by a convex set]\label{prop:birkoff-hilbert}
    Given a convex set $C \subseteq V$, we define the (open, pointed) cone over $C$ as: \[
    K(C) = \{(\lambda x, \lambda) \st x \in C, \lambda \in \bbRpos\}
    \]
\end{definition}

\begin{proposition}[{Birkhoff characterisation of the Hilbert metric,~\cite[Theorem 2.2]{lemmensbirkhoff}}]\label{prop:birkoff-hilbert-match}
    For a vector space $V$ and a open bounded convex set $C \subseteq V$, the space $\mathbf{P}(K(C))$ equipped with the Birkhoff metric $d_B$ is isometric to the space $C$ equipped with the Hilbert metric $\dh$.
\end{proposition}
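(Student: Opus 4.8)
The plan is to exhibit the canonical bijection between $C$ and $\mathbf{P}(K(C))$ and then verify that it carries $d_B$ to $\dh$ by a direct cross-ratio computation. First I would note that every ray of $K(C)$ meets the affine hyperplane $V \times \{1\}$ in exactly one point, namely $(x,1)$ with $x \in C$, which yields a bijection $\phi : C \to \mathbf{P}(K(C))$, $\phi(x) = [(x,1)]$. Since the defining quantities scale as $M(\lambda v, \lambda' w) = \tfrac{\lambda}{\lambda'} M(v,w)$ and $m(\lambda v, \lambda' w) = \tfrac{\lambda}{\lambda'} m(v,w)$ (immediate from the definitions after rescaling the witness and using that $\overline{K(C)}$ is a cone), the ratio $M/m$ and hence $d_B$ is invariant under positive scaling of each argument, so $d_B$ descends to a well-defined metric on $\mathbf{P}(K(C))$ and it suffices to prove $d_B\big((x,1),(y,1)\big) = \dh(x,y)$ for distinct $x,y \in C$.

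Second, because $C$ is bounded, its cone has the explicit closure $\overline{K(C)} = \{(\mu z, \mu) : z \in \overline{C},\ \mu \geq 0\}$, boundedness ruling out any extra horizontal directions at height $\mu = 0$. I would substitute this into the order relations. For $M$, the condition $(x,1) \preceq \lambda (y,1)$ means $(\lambda y - x, \lambda - 1) \in \overline{K(C)}$, which forces $\lambda \geq 1$ and, for $\lambda > 1$, places the point $z(\lambda) = y + \tfrac{1}{\lambda - 1}(y - x)$ in $\overline{C}$. As $\lambda$ decreases from $\infty$ to $1$ this point travels along $l_{xy}$ from $y$ outward in the direction away from $x$, so it remains in $\overline{C}$ exactly until it reaches the far boundary point $y'$, and the infimum is attained there. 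Symmetrically, $\mu(y,1) \preceq (x,1)$ puts $z'(\mu) = x + \tfrac{\mu}{1-\mu}(x - y) \in \overline{C}$, which travels from $x$ toward the near boundary point $x'$, so the supremum defining $m$ is attained at $x'$.

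Third, I would read off the extremal values in a coordinate on $l_{xy}$ normalized so that $x \mapsto 0$ and $y \mapsto 1$. Writing $y'$ at position $1 + t^\ast$ and $x'$ at position $-s^\ast$ with $t^\ast, s^\ast > 0$, the previous step gives $M = 1 + 1/t^\ast$ and $m = s^\ast/(1 + s^\ast)$, whence
\[
\frac{M}{m} = \frac{(1 + t^\ast)(1 + s^\ast)}{s^\ast\, t^\ast}.
\]
On the other hand the four ordered points $x' = -s^\ast$, $x = 0$, $y = 1$, $y' = 1 + t^\ast$ give, since one-dimensional norms reduce to absolute differences of positions,
\[
\dh(x,y) = \log\frac{\|x - y'\|\,\|x' - y\|}{\|x' - x\|\,\|y - y'\|} = \log\frac{(1 + t^\ast)(1 + s^\ast)}{s^\ast\, t^\ast},
\]
so that $d_B = \log(M/m) = \dh(x,y)$, completing the identification.

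The routine parts are the two cross-ratio computations, which coincide by construction. The main obstacle is the second step: pinning down $\overline{K(C)}$ and then correctly matching each order relation to the motion of an explicit point along $l_{xy}$ toward the correct boundary endpoint. In particular one must use boundedness of $C$ to guarantee that $l_{xy}$ meets $\partial C$ in exactly the two finite points $x'$ and $y'$, so that $M$ and $m$ are finite and strictly positive and the infimum and supremum are genuinely attained at these crossings rather than degenerating.
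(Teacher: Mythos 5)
Your proof is correct. Note that the paper itself gives no proof of this proposition: it is quoted as Theorem~2.2 of the cited reference on Birkhoff's version of the Hilbert metric, so there is no internal argument to compare against, and what you have produced is a valid self-contained derivation of the cited result. Your three steps are sound: the slice bijection $\phi(x)=[(x,1)]$ and the homogeneity $M(\lambda v,\lambda' w)=\tfrac{\lambda}{\lambda'}M(v,w)$, $m(\lambda v,\lambda' w)=\tfrac{\lambda}{\lambda'}m(v,w)$ legitimately reduce everything to representatives at height $1$; the identification $\overline{K(C)}=\{(\mu z,\mu)\st z\in\overline{C},\ \mu\geq 0\}$ is exactly where boundedness of $C$ enters (without it, extra horizontal recession directions would appear at $\mu=0$ and your witness-point analysis would break); and the normalization of coordinates on $l_{xy}$ with $x\mapsto 0$, $y\mapsto 1$, $x'\mapsto -s^{\ast}$, $y'\mapsto 1+t^{\ast}$ gives $M=(1+t^{\ast})/t^{\ast}$, $m=s^{\ast}/(1+s^{\ast})$, whose ratio matches the cross-ratio in Definition~\ref{def:hilbert-dist}. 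It is worth observing that your mechanism is precisely the one the paper redeploys, as a black box, in its proof of Theorem~\ref{prop:hilbert-vpm-formula}: there the authors also rescale the cone witness (their substitution $\theta=\lambda-1$ mirrors your $z(\lambda)=y+\tfrac{1}{\lambda-1}(y-x)$) to convert the Birkhoff order conditions into membership conditions in the slice, except they do it for the specific bicone $C_n=K(\VPM(n))$ and then solve the resulting Loewner inequalities spectrally, whereas you carry out the general-convex-body version and close the loop against the cross-ratio. One small point of care: for $\lambda=1$ the membership $(\lambda y-x,\lambda-1)\in\overline{K(C)}$ is impossible when $x\neq y$ (it would force $y-x=0$), so the feasible set for $M$ is genuinely $\lambda>1$; this does not affect the infimum, but it is worth stating to justify the strict inequality $M>1$.
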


\begin{example}
Consider the positive orthant cone $\bbR_{>0}^n$.
Then the Birkhoff distance is
$$
d_B(p,q)=\log \max_{i,j} \frac{p_i q_j}{q_i p_j},
$$
and the Hilbert distance on the standard simplex~\cite{nielsen2018clustering}  is
$$
d_H(p,q)=\log \frac{\max_i \frac{p_i}{q_i}}{\min_j\frac{p_j}{q_j}}.
$$
The standard simplex is interpreted as a slanted slice of the positive orthant cone.
\end{example}

\section{Hilbert geometry of the Variance-Precision Model}\label{sec:HGVPM}

\begin{definition}[Variance-precision manifold $\VPM(n)$]\label{def:VPM}
    We define the variance-precision manifold of dimension $n$ as the space:
    \[
    \VPM(n) \eqdef \{X \in \Sym(n) : 0 \prec X \prec I\}.
    \]
\end{definition}

\begin{remark}[Variance-precision model $\overline{\VPM(n)}$]\label{rk:vpmequiv}
    \cite{VarianceInfoMfd-1973} defines the variance information manifold of dimension $n$ as the space of real symmetric matrices $n \times n$, such that for all eigenvalues $\lambda_i, 1 \leq i \leq n$:
    \[
    0 \leq \lambda_i \leq 1
    \]
    This corresponds to the closure $\overline{\VPM(n)} = \{X \in \Sym(n) : 0 \preceq X \preceq I\}$ in $\Sym(n)$ in our~\Cref{def:VPM}, by the equivalence of Loewner order to eigenvalue ordering, as in~\Cref{prop:loewner-eigenvalues}.
\end{remark}

\begin{proposition}\label{prop:vpm-open-convex}
$\VPM(n) \subset \Sym(n) \simeq \mathbb{R}^{n(n+1)/2}$ is open in $\Sym(n)$, convex and bounded.
\end{proposition}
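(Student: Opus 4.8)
The plan is to write $\VPM(n)$ as the intersection of the two conditions $0 \prec X$ and $X \prec I$, and to treat openness, convexity, and boundedness separately, in each case transferring the corresponding property of $\PD(n)$ through the affine involution $X \mapsto I - X$.

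First, for openness, I would use the identity
$$
\VPM(n) = \PD(n) \cap \bigl(I - \PD(n)\bigr),
$$
where $I - \PD(n) = \{I - Y : Y \in \PD(n)\} = \{X \in \Sym(n) : X \prec I\}$. The key fact is that $\PD(n)$ is open in $\Sym(n)$: the smallest eigenvalue $X \mapsto \lambda_1(X)$ is continuous on $\Sym(n)$ (via the min--max formula appearing in the proof of \Cref{prop:loewner-eigenvalues}, or via continuity of the roots of the characteristic polynomial), so $\PD(n) = \{X : \lambda_1(X) > 0\}$ is the preimage of $(0,\infty)$ under a continuous map. Likewise $\{X : X \prec I\} = \{X : \lambda_n(X) < 1\}$ is the preimage of $(-\infty,1)$ under the continuous largest-eigenvalue map (equivalently, the image of the open set $\PD(n)$ under the homeomorphism $Y \mapsto I - Y$). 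A finite intersection of open sets is open.

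Next, for convexity, I would invoke that $\PD(n)$ is a convex cone and that $I - \PD(n)$ is convex as an affine image of a convex set. Concretely, if $0 \prec X \prec I$ and $0 \prec Y \prec I$ and $c \in [0,1]$, then $cX + (1-c)Y \succ 0$ since $\PD(n)$ is a convex cone, and $I - (cX + (1-c)Y) = c(I-X) + (1-c)(I-Y) \succ 0$ for the same reason; hence the convex combination lies in $\VPM(n)$. Finally, for boundedness, every $X \in \VPM(n)$ has all eigenvalues in $(0,1)$ by \Cref{prop:loewner-eigenvalues} (equivalently \Cref{rk:vpmequiv}). Since $\Sym(n)$ is finite-dimensional it suffices to bound a single norm, and for the Frobenius norm one has $\|X\|_F^2 = \sum_{i=1}^n \lambda_i(X)^2 < n$, so $\VPM(n)$ is contained in the Frobenius ball of radius $\sqrt{n}$.

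There is no substantial obstacle here; the only point requiring care is the continuity of the eigenvalue maps underpinning openness, which is standard and already implicit in the min--max characterization of \Cref{prop:loewner-eigenvalues}. Everything else is a routine transfer of the known properties of $\PD(n)$ across the involution $X \mapsto I - X$.
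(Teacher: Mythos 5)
Your proof is correct and follows essentially the same route as the paper: the identical decomposition $\VPM(n) = \PD(n) \cap (I - \PD(n))$ for openness and convexity, and a Frobenius-norm bound of order $\sqrt{n}$ for boundedness (the paper derives $\|A\|_F^2 \leq \tr(A) \leq n$ from $0 \preceq A^2 \preceq A \preceq I$, which is the same estimate you get by summing squared eigenvalues). Your additional justifications, such as the continuity of eigenvalue maps and the explicit convex-combination check, merely fill in routine details the paper leaves implicit.
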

\begin{proof}
$\PD(n)$ is convex and open in $\Sym(n)$, and $\VPM(n) = \PD(n) \cap (I - \PD(n))$, i.e., an intersection of two open convex sets is open and convex. Moreover, all norms in a finite-dimensional vector space are equivalent, which means we only have to show boundedness in one norm. This means that $0 \preceq A \preceq I$ implies that:
\[
0 \preceq A^2 \preceq A \preceq I
\]
and therefore:
\[
\|A\|_F^2 = \tr(A^\top A) \leq \tr(A) \leq n.
\]
\end{proof}

\begin{proposition}\label{prop:vpm-invariance}
$\VPM(n)$ is invariant under the mapping $X \mapsto I - X$ and conjugation with orthonormal matrices $X \mapsto U^TXU$ for $U \in O(n)$.
\end{proposition}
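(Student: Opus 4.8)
The plan is to treat the two maps separately, in each case verifying that the two defining Loewner inequalities $0 \prec X$ and $X \prec I$ are preserved. Since membership in $\VPM(n)$ is the conjunction of these two conditions together with symmetry, it suffices to check that each map keeps $X$ symmetric and sends both inequalities to inequalities of the same form.

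For the reflection $X \mapsto I - X$, I would first note that $I - X$ is symmetric whenever $X$ is, and that the map is an involution, so checking that it sends $\VPM(n)$ into itself already gives that it is a bijection of $\VPM(n)$. The key observation is that the two constraints simply swap. Indeed, $0 \prec I - X$ is equivalent to $X \prec I$, and $I - X \prec I$ is equivalent to $0 \prec X$. Since $X \in \VPM(n)$ supplies both $0 \prec X$ and $X \prec I$, we obtain $0 \prec I - X \prec I$, i.e. $I - X \in \VPM(n)$.

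For the conjugation $X \mapsto U^\top X U$ with $U \in O(n)$, I would use that congruence by an invertible matrix preserves the Loewner order, a consequence of the congruence-invariance of positive definiteness (Sylvester's law of inertia): $A \prec B$ implies $U^\top A U \prec U^\top B U$. Applying this to $0 \prec X \prec I$ yields $U^\top 0 U \prec U^\top X U \prec U^\top I U$. The crucial use of orthogonality is that $U^\top I U = U^\top U = I$, so the upper bound is preserved \emph{exactly}, while the lower bound is trivially $0$; hence $0 \prec U^\top X U \prec I$. Alternatively, one may invoke \Cref{prop:loewner-eigenvalues} together with \Cref{rk:vpmequiv}: orthogonal conjugation is a similarity transformation $U^\top X U = U^{-1} X U$, so it preserves the spectrum, and the eigenvalue characterization of $\VPM(n)$ gives the claim at once.

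There is no genuine obstacle here; the only point requiring care is to record explicitly why orthogonality (and not mere invertibility) is needed, namely that $U^\top U = I$ is exactly what fixes $I$ and so preserves the second constraint. I would also remark that both maps are bijections of $\VPM(n)$: the reflection is its own inverse, and conjugation by $U$ is inverted by conjugation by $U^\top = U^{-1}$.
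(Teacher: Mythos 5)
Your proof is correct, but it takes a different route from the paper's. The paper proves both invariances spectrally, via the eigenvalue characterization of the Loewner order (\Cref{prop:loewner-eigenvalues}): for $X \mapsto I - X$ it observes that each eigenvalue $\lambda \in (0,1)$ of $X$ becomes $1-\lambda \in (0,1)$, and for $X \mapsto U^\top X U$ it notes that the spectrum is invariant under $O(n)$-conjugation. You instead work directly with the defining Loewner inequalities: for the reflection you observe that the two constraints $0 \prec X$ and $X \prec I$ simply swap under $X \mapsto I - X$, and for conjugation you use that congruence by an invertible matrix preserves the Loewner order, with orthogonality entering only through $U^\top I U = I$. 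Your argument has the merit of avoiding spectral decomposition entirely and of isolating exactly why orthogonality (rather than mere invertibility) is required --- a point that is invisible in the paper's proof, since general similarity also preserves eigenvalues but a general congruence does not, and this distinction matters later when the paper contrasts the Hilbert VPM distance (invariant only under $O(n)$-congruence) with the AIRM distance (invariant under all of $\GL(n)$-congruence). The paper's eigenvalue argument, on the other hand, is shorter given its already-established \Cref{prop:loewner-eigenvalues} and connects directly to James's eigenvalue-based definition recalled in \Cref{rk:vpmequiv}; your closing remark about the alternative spectral argument is precisely the paper's proof.
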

\begin{proof}
    We use the characterisation of the Loewner order in terms of eigenvalues, as in~\Cref{prop:loewner-eigenvalues}. For the first part, observe that for each eigenvalue $\lambda$ of $X$, the inequality $0 < \lambda <1$ implies $0 < 1 - \lambda < 1$. For the second part, spectrum is invariant under conjugation with $O(n)$.
\end{proof}

Let us now compute the Hilbert distance on the interior of the Variance-Precision manifold (\Cref{def:VPM}).

\begin{theorem}[Hilbert distance on $\VPM(n)$]\label{prop:hilbert-vpm-formula}
    Given two matrices $A, B \in \VPM(n)$, 
    \[
    \dh(A, B) = \log \frac{\max(\lambda_{\max}, \mu_{\max})}{\min(\lambda_{\min}, \mu_{\min})}
    \]
    where 
    \[
        \lambda_{\min} = \lambda_{\min}(B^{-1}A), \qquad \lambda_{\max} = \lambda_{\max}(B^{-1}A),
    \]
    are the minimal and maximal eigenvalues of the $A^{-1}B$ matrix, and
    \[
        \mu_{\min} = \lambda_{\min}((I - B)^{-1}(I - A)), \qquad 
        \mu_{\max} = \lambda_{\max}((I - B)^{-1}(I - A)),
    \]
    are the minimal and maximal eigenvalues of the $(I - B)^{-1}(I - A)$ matrix.
\end{theorem}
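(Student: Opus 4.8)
The plan is to sidestep the explicit computation of the two boundary intersection points of the line $l_{AB}$ and instead invoke the Birkhoff characterisation of the Hilbert metric (\Cref{prop:birkoff-hilbert-match}). First I would pass to the cone over the domain, $K(\VPM(n)) \subseteq \Sym(n)\times\bbR$, as in \Cref{prop:birkoff-hilbert}. Writing $Y=\lambda X$, one identifies its closure as
\[
\overline{K(\VPM(n))} = \{(Y,\lambda)\in\Sym(n)\times\bbR : 0 \preceq Y \preceq \lambda I\},
\]
so that the induced cone order reads $(Y_1,\lambda_1)\preceq(Y_2,\lambda_2)$ iff $0\preceq Y_2-Y_1$ and $Y_2-Y_1\preceq(\lambda_2-\lambda_1)I$. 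Since the isometry of \Cref{prop:birkoff-hilbert-match} sends $X\in\VPM(n)$ to the ray through $(X,1)$, it then suffices to compute $\dh(A,B)=d_B((A,1),(B,1))$, the Birkhoff distance between the two canonical representatives.

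Next I would evaluate $M((A,1),(B,1))=\inf\{\mu>0 : (A,1)\preceq\mu(B,1)\}$. The membership $(\mu B-A,\mu-1)\in\overline{K(\VPM(n))}$ unpacks into three conditions: the scalar bound $\mu\geq 1$; the Loewner inequality $A\preceq\mu B$; and $\mu B-A\preceq(\mu-1)I$, which rearranges to $(I-A)\preceq\mu(I-B)$. Using $A,B\succ 0$ and $I-A,I-B\succ 0$ on $\VPM(n)$, I would convert each Loewner inequality into an eigenvalue bound by a congruence: $A\preceq\mu B\iff\lambda_{\max}(B^{-1}A)\leq\mu$, because $B^{-1}A$ is similar to the symmetric matrix $B^{-1/2}AB^{-1/2}$ and hence has real nonnegative spectrum (here \Cref{prop:loewner-eigenvalues} is the underlying mechanism), and likewise $(I-A)\preceq\mu(I-B)\iff\mu_{\max}\leq\mu$. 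This gives $M=\max(1,\lambda_{\max},\mu_{\max})$. The mirror-image computation of $m((A,1),(B,1))=\sup\{\nu>0 : \nu(B,1)\preceq(A,1)\}$ yields $m=\min(1,\lambda_{\min},\mu_{\min})$.

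The remaining step, and the one I expect to be the only genuinely nontrivial point, is to discharge the scalar constraints $\mu\geq 1$ and $\nu\leq 1$, i.e. to show $\max(\lambda_{\max},\mu_{\max})\geq 1$ and $\min(\lambda_{\min},\mu_{\min})\leq 1$ so that the spurious $1$'s drop out of the two extrema. Here I would argue by contradiction: $\lambda_{\max}<1$ forces $A\prec B$, while $\mu_{\max}<1$ forces $(I-A)\prec(I-B)$, that is $B\prec A$; these are incompatible, so at least one of $\lambda_{\max},\mu_{\max}$ is $\geq 1$. The dual argument (both $\lambda_{\min}>1$ and $\mu_{\min}>1$ would give $A\succ B$ and $B\succ A$) settles $\min(\lambda_{\min},\mu_{\min})\leq 1$. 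Combining, $M=\max(\lambda_{\max},\mu_{\max})$ and $m=\min(\lambda_{\min},\mu_{\min})$, and the claimed formula follows from $\dh(A,B)=\log(M/m)$. A purely geometric alternative would parameterise $\gamma(t)=(1-t)A+tB$ and locate the exit parameters where $\det\gamma(t)=0$ or $\det(I-\gamma(t))=0$, but the Birkhoff route is cleaner and reuses the machinery already assembled.
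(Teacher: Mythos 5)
Your proof is correct, and it follows essentially the same route as the paper: both pass through the Birkhoff characterisation (\Cref{prop:birkoff-hilbert-match}), reduce membership in the cone order to the pair of Loewner inequalities $A \preceq \mu B$ and $I - A \preceq \mu(I-B)$, and convert these to eigenvalue bounds by congruence. Two differences are worth recording. First, a cosmetic one: the paper embeds $\VPM(n)$ into $\PD(n)\times\PD(n)$ via $A \mapsto \hat{A} = (A, I-A)$ and cones over the slice $\{(X,Y) : X + Y = I\}$ inside $\Sym(n)\times\Sym(n)\times\bbR$, whereas you cone over $\VPM(n)$ directly inside $\Sym(n)\times\bbR$; the two cones are linearly isomorphic (the paper's second matrix coordinate is determined by the first), so nothing hinges on this. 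Second, and more substantively: in the paper's computation the membership $\lambda(\hat{B},1)-(\hat{A},1)\in K(C)$ forces the last coordinate $\theta = \lambda - 1$ to be positive, i.e.\ $\lambda > 1$, but this constraint is silently dropped when the paper passes to the displayed formula $M = \inf_{\lambda>0}\{A \preceq \lambda B \text{ and } I-A \preceq \lambda(I-B)\}$. Your final step --- showing by contradiction that $\max(\lambda_{\max},\mu_{\max}) \geq 1$ and $\min(\lambda_{\min},\mu_{\min}) \leq 1$, since $\lambda_{\max} < 1$ and $\mu_{\max} < 1$ would give $A \prec B$ and $B \prec A$ simultaneously --- is exactly the justification needed to discharge that constraint, so your write-up is in this respect more careful than the paper's own proof.
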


\begin{proof}

$\VPM(n)$ is an open bounded convex set (\Cref{prop:vpm-open-convex}), therefore $\VPM(n)$ can be equipped with the Hilbert distance.
 To prove the theorem, we employ Birkhoff's characterisation of the Hilbert metric, using the Loewner order $\preceq$ (see~\Cref{def:loewner}).

We consider the affine map $(\hat{\cdot}) :\VPM(n) \to \PD(n) \times \PD(n)$ given by $\hat{A} = (A, I - A)$. The image of this map is a convex set:\[
C = \{(X, Y) \in \PD(n) \times \PD(n) : X + Y = I\}
\]
On the set $C$, we build the cone \[
K(C) = \{(\lambda X, \lambda Y, \lambda): (X, Y) \in C, \lambda\in\bbRpos\} \subset \Sym(n) \times \Sym(n) \times \bbRpos
\]
By~\Cref{prop:birkoff-hilbert}, we have the equality:
\[
d_H(A, B) = d_H(\hat{A}, \hat{B}) = d_B((\hat{A}, 1), (\hat{B}, 1)) =\log \frac{M\left((\hat{A}, 1), (\hat{B}, 1)\right)}{m\left((\hat{A}, 1), (\hat{B}, 1)\right)}
\]
where $d_H$ is the Hilbert metric on $\VPM(n)$ equal to the Hilbert metric on its affine image $C$, and $d_B$ is the Birkhoff metric on $\mathbf{P}(K(C))$, which we obtain here by considering the projective slice $[\hat{X} : 1]$. We first compute the numerator $M$:
\begin{align*}
M\left((\hat{A}, 1), (\hat{B}, 1)\right) &= \inf\left\{\lambda > 0 : (\hat{A}, 1) \preceq_{K(C)} \lambda (\hat{B}, 1)\right\},\\
&= \inf\left\{\lambda > 0 : \lambda (\hat{B}, 1) - (\hat{A}, 1) \in K(C)\right\}, \\
&= \inf\left\{\lambda > 0 : \exists_{Z \in \VPM(n)} \exists_{\theta \in \bbRpos} \text{ s.t. } \lambda (\hat{B}, 1) - (\hat{A}, 1) = (\theta \hat{Z}, \theta) \right\}.
\end{align*}
The last equality can only happen if $\theta = \lambda - 1$, and therefore $\lambda\hat{B} - \hat{A} = (\lambda - 1)\hat{Z}$, meaning:
\[
\lambda B - A = (\lambda - 1)Z, \qquad \lambda(I - B) - (I - A) = (\lambda - 1)(I - Z),
\]
where the second equality is equivalent to the first one. Since $0 \prec Z \prec I$, we arrive at two inequalities:
\[
A \prec \lambda B, \qquad (I - A) \prec \lambda (I - B)
\]
Performing an analogous computation for $m\left((\hat{A}, 1), (\hat{B}, 1)\right)$, we get:
\begin{align*}
M(\hat{A},\hat{B}) &= \inf_{\lambda > 0}\{ A\preceq \lambda B \text{ and } (I - A) \preceq \lambda (I - B)\}, \\
m(\hat{A},\hat{B}) &= \sup_{\mu > 0}\{\mu B\preceq A \text{ and } \mu(I-B)\preceq I-A\}.
\end{align*}
Still focusing on $M(\hat{A}, \hat{B})$, if $A \preceq \lambda B$, then $A^{-1/2}AA^{-1/2} \preceq \lambda A^{-1/2}BA^{-1/2}$. The converse in \Cref{prop:loewner-eigenvalues} proves that this holds if and only $\frac{1}{\lambda} \leq \lambda_{\min}(A^{-1}B) = \lambda_{\min}(A^{-1/2}BA^{-1/2})$, and therefore: 
\[
\lambda \geq \frac{1}{\lambda_{\min}(A^{-1}B)} = \lambda_{\max}(B^{-1}A)\]
Similarly, applied to $(I-A)\preceq \lambda (I - B)$, we get $\lambda \geq \lambda_{\max}((I - B)^{-1}(I - A))$. Taking the infimum over $\lambda$, we obtain:
\[
M(\hat{A}, \hat{B}) = \max\left(\lambda_{\max}(B^{-1}A), \lambda_{\max}((I - B)^{-1}(I - A))\right).
\]
The case of $m(\hat{A}, \hat{B})$ follows in the same way. If $\mu I \leq B^{-1}A$ and $\mu I \leq (I - B)^{-1}(I - A)$, then:
\[
\mu \leq \lambda_{\min}(B^{-1}A), \qquad 
\mu \leq \lambda_{\min}((I - B)^{-1}(I - A)),
\]
and so taking the infimum:
\[
m(\hat{A}, \hat{B}) = \min\left(\lambda_{\min}(B^{-1}A), \lambda_{\min}((I - B)^{-1}(I - A))\right).
\]
Substituting it in the original formula, we get the claimed formula:
\[
d_H(A, B) = \log \frac{\max(\lambda_{\max}, \mu_{\max})}{\min(\lambda_{\min}, \mu_{\min})},
\]
for $\lambda_{\min, \max}$ the minimal and maximal eigenvalues of $B^{-1}A$, and $\mu_{\min, \max}$ the minimal and maximal eigenvalues of $(I- B)^{-1}(I - A)$.
\end{proof}

\begin{remark}
    Because of the logarithm of ratios, in~\Cref{prop:hilbert-vpm-formula}, the Hilbert distance on $\VPM(n)$ can be equivalently written using matrices $A^{-1}B$ and $(I - A)^{-1}(I - B)$ in place of $B^{-1}A$ and $(I - B)^{-1}(I- A)$ respectively.
		The transformation $\Mob(A,B)=(I - A)^{-1}(I - B)$ is called a matrix M\"obius transformation.
\end{remark}

\section{Invariance properties of the Hilbert VPM distance}\label{sec:invariance}

Let us first state the following two propositions:

\begin{proposition}[$\iota$ map]{{\cite{VarianceInfoMfd-1973}}}
    The variance-precision manifold is diffeomorphic to the set of symmetric positive-definite matrices of dimension $n$ via the map $\iota(X) = X(I + X)^{-1}$ and its inverse $\iota^{-1}(A) = A(I - A)^{-1}$.
\end{proposition}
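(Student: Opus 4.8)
The plan is to verify directly that $\iota$ and $\iota^{-1}$ are mutually inverse smooth maps exchanging $\PD(n)$ and $\VPM(n)$, since a bijection that is smooth with smooth inverse is a diffeomorphism. First I would check well-definedness and the containment $\iota(\PD(n)) \subseteq \VPM(n)$. For $X \in \PD(n)$ the matrix $I + X \succ 0$ is invertible, so $\iota(X) = X(I+X)^{-1}$ is defined; moreover $X$ and $(I+X)^{-1}$ are symmetric and commute (both are functions of $X$), so their product is symmetric, placing $\iota(X) \in \Sym(n)$. The crucial observation is the algebraic identity
\[
I - \iota(X) = I - X(I+X)^{-1} = (I+X)^{-1},
\]
which is positive definite, giving $\iota(X) \prec I$; together with $\iota(X) \succ 0$ (a product of commuting positive-definite matrices, with eigenvalues $\lambda_i/(1+\lambda_i) > 0$), this yields $0 \prec \iota(X) \prec I$, i.e. $\iota(X) \in \VPM(n)$.

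The same identity delivers the inverse essentially for free, and this is the step I would foreground. From $I - \iota(X) = (I+X)^{-1}$ we read off $(I - \iota(X))^{-1} = I + X$, whence
\[
\iota^{-1}(\iota(X)) = \iota(X)\,(I - \iota(X))^{-1} = X(I+X)^{-1}(I+X) = X.
\]
Dually, for $A \in \VPM(n)$ we have $I - A \succ 0$, so $\iota^{-1}(A) = A(I-A)^{-1}$ is symmetric, and the mirror identity $I + \iota^{-1}(A) = (I-A)^{-1}$ shows $\iota^{-1}(A) \succ 0$ (its eigenvalues $\mu_i/(1-\mu_i)$ are positive), so $\iota^{-1}(A) \in \PD(n)$; the analogous computation gives $\iota(\iota^{-1}(A)) = A$. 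Hence $\iota$ and $\iota^{-1}$ are genuine mutual inverses between $\PD(n)$ and $\VPM(n)$.

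It then remains only to record smoothness. Both maps are assembled from the affine assignments $X \mapsto I \pm X$, matrix multiplication (polynomial, hence smooth in the entries), and matrix inversion, which is $C^\infty$ on the open set of invertible matrices. Since $I + X$ is invertible throughout $\PD(n)$ and $I - A$ is invertible throughout $\VPM(n)$, both $\iota$ and $\iota^{-1}$ are smooth on their respective open domains, and the argument is complete. I expect no serious obstacle: the only point requiring genuine care is verifying that the images are symmetric and lie in the correct order interval, and the single identity $I - \iota(X) = (I+X)^{-1}$ settles both the containment and the inversion in one stroke.
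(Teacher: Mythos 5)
Your proof is correct. Note, however, that the paper itself offers no proof of this proposition at all: it is stated as a quoted result of James \cite{VarianceInfoMfd-1973}, with only the derivative formulas recorded afterwards (also without proof). So there is no internal argument to compare against; what you have written is a valid, self-contained verification that the paper omits. Your organizing idea---the single identity $I - \iota(X) = (I+X)^{-1}$, which simultaneously gives the containment $\iota(X) \prec I$ and, upon inversion, the relation $(I-\iota(X))^{-1} = I + X$ needed to check $\iota^{-1}\circ\iota = \mathrm{id}$---is exactly the right mechanism, and it is the same algebra the paper later exploits when it shows $\iota^{-1}\circ(I-\cdot)\circ\iota$ equals matrix inversion on $\PD(n)$. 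One point of precision: the mirror identity $I + \iota^{-1}(A) = (I-A)^{-1} \succ 0$ by itself only yields $\iota^{-1}(A) \succ -I$, not $\iota^{-1}(A) \succ 0$; the actual positivity comes from your parenthetical argument (eigenvalues $\mu_i/(1-\mu_i) > 0$ for $\mu_i \in (0,1)$, or equivalently the commuting product of the positive-definite matrices $A$ and $(I-A)^{-1}$), so that remark is doing real work and should be stated as the reason rather than as an aside. With that emphasis corrected, the chain---mutually inverse bijections between the open sets $\PD(n)$ and $\VPM(n)$ of $\Sym(n)$, each smooth as a composition of matrix multiplication and inversion on invertible matrices---is complete and establishes the diffeomorphism.
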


Those mappings can be interpreted as matrix M\"obius transformations.

\begin{proposition}
    The derivative of the map $\iota$ and its inverse $\iota^{-1}$ are given by:
    \[
    d\iota_X(H) = (I + X)^{-1}H(I + X)^{-1} \qquad d\iota^{-1}_A(H) = (I - A)^{-1}H(I - A)^{-1}
    \]
\end{proposition}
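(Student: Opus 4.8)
The plan is to reduce the differentiation of a matrix product to the single known derivative of matrix inversion. The starting observation is the algebraic identity $X(I+X)^{-1} = I - (I+X)^{-1}$, obtained by writing $X = (I+X) - I$; this recasts $\iota(X) = I - (I+X)^{-1}$ as a constant minus an inverse, so that all of the $X$-dependence sits inside a single inversion. I would first record the standard fact that the Fr\'echet derivative of the matrix-inversion map $A \mapsto A^{-1}$ at an invertible $A$ is $H \mapsto -A^{-1} H A^{-1}$, which follows by differentiating the identity $A A^{-1} = I$ with the Leibniz rule. Invertibility is never at issue on the relevant domains: for $X \succ 0$ one has $I + X \succ I \succ 0$, and for $0 \prec A \prec I$ one has $I - A \succ 0$.

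With this in hand, the computation for $\iota$ is a short chain-rule argument. The inner map $X \mapsto I + X$ has derivative $H \mapsto H$, so composing with the inversion derivative gives the derivative of $X \mapsto (I+X)^{-1}$ as $H \mapsto -(I+X)^{-1} H (I+X)^{-1}$; since $\iota(X) = I - (I+X)^{-1}$, the leading minus signs cancel and yield $d\iota_X(H) = (I+X)^{-1} H (I+X)^{-1}$. For the inverse map I would use the parallel identity $A(I-A)^{-1} = (I-A)^{-1} - I$. Here the inner map $A \mapsto I - A$ contributes a factor $-H$, which combines with the minus sign from inversion to produce a net plus sign, giving $d\iota^{-1}_A(H) = (I-A)^{-1} H (I-A)^{-1}$. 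One may alternatively avoid the rewriting step and differentiate the product $X(I+X)^{-1}$ directly, collecting terms with the identity $I - X(I+X)^{-1} = (I+X)^{-1}$; the identity-based route is cleaner and less error-prone.

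The one point demanding care is non-commutativity: the Leibniz rule must preserve the order of the factors, and the inversion derivative sandwiches $H$ symmetrically on both sides, so at no stage may either expression be collapsed into a scalar-style $f'$. As a consistency check, which is not strictly part of the proof but confirms the diffeomorphism claim of the preceding proposition, I would verify that $d\iota_X$ and $d\iota^{-1}_{\iota(X)}$ are mutually inverse linear maps. Setting $A = \iota(X)$ gives $I - A = (I+X)^{-1}$, hence $(I-A)^{-1} = I + X$, and substitution shows $d\iota^{-1}_A\big(d\iota_X(H)\big) = (I+X)(I+X)^{-1} H (I+X)^{-1}(I+X) = H$, as required.
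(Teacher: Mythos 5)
Your proposal is correct. Note that the paper states this proposition without any proof at all (it is presented as a standard fact following the $\iota$-map proposition), so there is no paper argument to compare against; your write-up supplies the missing justification. The reduction via the identities $X(I+X)^{-1} = I - (I+X)^{-1}$ and $A(I-A)^{-1} = (I-A)^{-1} - I$ is exactly the clean way to do this: it isolates all dependence inside a single matrix inversion, whose Fr\'echet derivative $H \mapsto -A^{-1}HA^{-1}$ is standard, and the chain rule then gives both formulas with the sign bookkeeping you describe (the minus from $\iota(X) = I - (I+X)^{-1}$ cancelling the inversion sign, and the inner derivative $-H$ of $A \mapsto I-A$ doing the same for $\iota^{-1}$). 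Your invertibility remarks ($I + X \succ 0$ on $\PD(n)$, $I - A \succ 0$ on $\VPM(n)$) and the final check that $d\iota^{-1}_{\iota(X)} \circ d\iota_X = \mathrm{id}$, using $(I - \iota(X))^{-1} = I + X$, are both accurate and the latter nicely corroborates the diffeomorphism claim of the preceding proposition.
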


The formula of the Hilbert VPM distance from~\Cref{prop:hilbert-vpm-formula} immediately implies the following.
\begin{proposition}[Identity-complement invariance]\label{prop:ImX}
The map $X \mapsto I - X$ is an isometry of $\VPM(n)$.
\end{proposition}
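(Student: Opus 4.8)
The plan is to read the invariance directly off the closed-form expression in \Cref{prop:hilbert-vpm-formula}, exploiting the fact that the complementation map $\phi(X) = I - X$ merely interchanges the two matrices whose spectra enter the distance formula.

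First I would dispose of the preliminaries that make the statement meaningful. By \Cref{prop:vpm-invariance} the map $\phi$ sends $\VPM(n)$ into itself, and since $\phi(\phi(X)) = X$ it is an involution, hence a bijection of $\VPM(n)$ onto itself. It therefore remains only to verify distance preservation, i.e. $\dh(\phi(A), \phi(B)) = \dh(A, B)$ for all $A, B \in \VPM(n)$.

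Next I would compute the two spectra appearing in the formula for the transformed pair. Writing $A' = I - A$ and $B' = I - B$, the first matrix of \Cref{prop:hilbert-vpm-formula} for the pair $(A', B')$ becomes $(B')^{-1} A' = (I - B)^{-1}(I - A)$, whose eigenvalues are exactly the $\mu_{\min}, \mu_{\max}$ of the original pair $(A,B)$; the second matrix becomes $(I - B')^{-1}(I - A') = B^{-1} A$, whose eigenvalues are exactly the $\lambda_{\min}, \lambda_{\max}$ of the original pair. Thus $\phi$ swaps the roles of the $\lambda$- and $\mu$-eigenvalues.

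Finally, since the distance formula $\dh(A, B) = \log \frac{\max(\lambda_{\max}, \mu_{\max})}{\min(\lambda_{\min}, \mu_{\min})}$ is visibly symmetric under exchanging the pair $(\lambda_{\min}, \lambda_{\max})$ with $(\mu_{\min}, \mu_{\max})$, substituting the computed spectra yields $\dh(\phi(A), \phi(B)) = \dh(A, B)$. There is essentially no obstacle here: the entire content is the observation that complementation swaps the two eigenvalue problems and that the formula is symmetric in them. The only point requiring (minimal) care is the well-definedness and bijectivity of $\phi$ on $\VPM(n)$, both already supplied by \Cref{prop:vpm-invariance} together with the involution property.
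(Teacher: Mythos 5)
Your proof is correct and follows exactly the route the paper takes: the paper justifies this proposition by noting it is ``immediately implied'' by the closed-form formula of \Cref{prop:hilbert-vpm-formula}, and your argument simply fills in those details, observing that $X \mapsto I - X$ swaps the matrices $B^{-1}A$ and $(I-B)^{-1}(I-A)$ and that the formula is symmetric under exchanging their spectra. Your additional care about well-definedness and bijectivity via \Cref{prop:vpm-invariance} is consistent with the paper and completes the argument.
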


We observe that the map $X \mapsto X^{-1}$ is an isometry for the  affine invariant Riemannian metric (AIRM)~\cite{pennec2006riemannian} (and for the log-Euclidean metric~\cite{arsigny2006log}).

\begin{remark}
The invariance of the Hilbert metric on $\VPM(n)$ under map $X \mapsto I -X$ translates to the invariance of the standard Riemannian metric on $\PD(n)$ under matrix inverse $X \mapsto X^{-1}$.
\end{remark}

\begin{proof}
We compute:
\begin{align*}
(\iota^{-1} \circ (I - \cdot) \circ \iota)(X) &= \iota^{-1}(I - \iota(X)) \\
&= \iota^{-1}(I - X(I + X)^{-1}) \\
&= (I - X(I + X)^{-1})(I - I + X(I + X)^{-1})^{-1} \\
&= (I - X(I + X)^{-1})(I + X)X^{-1} \\
&= ((I + X)X^{-1} - I) \\
&= X^{-1}(I + X - X) = X^{-1}
\end{align*}
\end{proof}

\begin{proposition}\label{prop:invariance-conjugation}
Conjugation under orthonormal matrix $U$, that is, a map $X \mapsto U^\top XU$ is an isometry in the Hilbert metric on $\VPM(n)$.
\end{proposition}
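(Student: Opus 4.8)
The plan is to deduce the claim directly from the closed-form distance formula of Theorem~\ref{prop:hilbert-vpm-formula}, rather than working with cross-ratios of boundary intersection points. That formula expresses $\dh(A,B)$ solely through the extremal eigenvalues of the two matrices $B^{-1}A$ and $(I-B)^{-1}(I-A)$, so to prove that $\phi_U(X)\eqdef U^\top X U$ is an isometry it suffices to show that conjugation by an orthonormal $U$ leaves the spectra of both matrices unchanged. As a preliminary I would note that $\phi_U$ maps $\VPM(n)$ into itself, which is exactly the second half of Proposition~\ref{prop:vpm-invariance}, so $\dh(\phi_U(A),\phi_U(B))$ is well defined and lies in the same domain.

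The heart of the argument is two short algebraic identities. Using $U^{-1}=U^\top$, the first matrix transforms as
\[
\phi_U(B)^{-1}\phi_U(A) = U^\top B^{-1} U\, U^\top A U = U^\top\!\left(B^{-1}A\right)U .
\]
For the second matrix, the key point is that orthonormality gives $\phi_U(I)=U^\top U = I$, hence $I-\phi_U(X)=\phi_U(I-X)$ for every $X$, from which
\[
\bigl(I-\phi_U(B)\bigr)^{-1}\bigl(I-\phi_U(A)\bigr) = U^\top\!\left((I-B)^{-1}(I-A)\right)U .
\]
Both right-hand sides are similarity transforms of the original matrices.

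Since similar matrices have identical characteristic polynomials, the eigenvalues $\lambda_{\min},\lambda_{\max},\mu_{\min},\mu_{\max}$ entering Theorem~\ref{prop:hilbert-vpm-formula} are all preserved by $\phi_U$, and substituting into the formula yields $\dh(\phi_U(A),\phi_U(B))=\dh(A,B)$, the desired isometry. There is no real obstacle once the formula is available; the only point deserving attention is \emph{where} orthogonality is genuinely needed. The $\lambda$-matrix reduces to a similarity transform for any invertible $U$, because the stray factors $(U^\top)^{-1}U^\top$ cancel; it is the complement matrix that forces $U\in O(n)$, since $I-\phi_U(X)=\phi_U(I-X)$ requires precisely $U^\top U = I$. (An alternative, formula-free route observes that $\phi_U$ is a linear bijection of $\Sym(n)$ fixing $\VPM(n)$ and its boundary, hence carrying lines to lines and preserving cross-ratios; but the spectral computation above is the most immediate.)
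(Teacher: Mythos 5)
Your proof is correct and follows essentially the same route as the paper: both first note that conjugation by $U \in O(n)$ is an automorphism of $\VPM(n)$, then apply the closed-form formula of Theorem~\ref{prop:hilbert-vpm-formula} after checking that $B^{-1}A$ and $(I-B)^{-1}(I-A)$ transform by similarity, so their spectra are unchanged. Your added observation that orthogonality is only genuinely needed for the complement matrix (since $I-\phi_U(X)=\phi_U(I-X)$ requires $U^\top U=I$) is a nice touch not made explicit in the paper, but the argument is the same.
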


\begin{proof}
    First, we note that eigenvalues are invariant under conjugation, and conjugation with an orthonormal matrix $U \in O(n)$  is a congruence, which preserves symmetry, therefore conjugation $X \mapsto U^TXU = U^{-1}XU$ is an automorphism of $\VPM(n)$. Given two matrices $X, Y \in \VPM(n)$ and an orthonormal matrix $U$, we write $A = U^{-1}XU$ and $B = U^{-1}YU$ and calculate:
    \[
    \dh(A, B) = \log \frac{\max(\lambda_{\max}(A^{-1}B), \lambda_{\max}((I - A)^{-1}(I - B)))}{\min(\lambda_{\min}(A^{-1}B), \lambda_{\min}((I - A)^{-1}(I - B)))}
    \]
    But then:
    \[
    A^{-1}B = (U^{-1}XU)^{-1} (U^{-1}YU) = (U^{-1}X^{-1}U) (U^{-1}YU) = U^{-1}(X^{-1}Y)U
    \]
    and again, the eigenvalues are invariant under conjugation, so $\lambda_{\min, \max}\left(A^{-1}B\right) = \lambda_{\min, \max}\left( X^{1-}Y\right)$.
    We also observe that:
    \begin{align*}
    (I - A)^{-1}(I - B) &= (I - Q^{-1}XQ)^{-1}(I - Q^{-1}YQ) 
    \\
    &= (Q^{-1}(I - X)Q)^{-1}(Q^{-1}(I - Y)Q) 
    \\
    &= Q^{-1}(I - X)^{-1}(I - Y)Q    
    \end{align*}
    and therefore the same argument applies here. Thus, the Hilbert distance is preserved.
\end{proof}

Similarly, the conjugation with the orthonormal matrix $X \mapsto U^TXU$ is an isometry for the AIRM metric (and for the log-Euclidean metric).

\begin{remark}
    The invariance of the Hilbert metric on $\VPM(n)$ under conjugation with orthonormal matrix $X \mapsto U^TXU$ translates into the invariance of the standard Riemannian metric on $\PD(n)$ under conjugation $X \mapsto U^TXU$.
\end{remark}
\begin{proof}
We compute:
\begin{align*}
U^T\iota(X)U &= U^TX(I + X)^{-1}U \\
&= (U^TX U)\left(U^T(I + X)^{-1}U\right) \\
&= (U^TX U)\left((U^T(I + X)U)^{-1}\right) \\
& = (U^TX U)\left((I + U^TXU)^{-1}\right) \\
& =\iota(U^TXU)
\end{align*}
and similarly for $U^T\iota^{-1}(X)U = \iota^{-1}(U^TXU)$. Thus:
\[
(\iota^{-1} \circ (U^T\cdot U) \circ \iota)(X) = U^TXU
\]
\end{proof}


Next, we show that these two invariant transformations are the only ones by characterizing the isometries of the VPM when $n\geq 2$.

\section{Characterisation of VPM isometries}\label{sec:char}

The argument is based on the simplification from isometries of Hilbert metric to collineations of the underlying cone (for the Birkhoff characterisation). We assume $n \geq 2$ everywhere below. We will denote the group of isometries of a metric space $X$ by $\Isom(X)$.

\begin{definition}
    Given a real vector space $V$, three points $x, y, z \in X$ are called collinear if there exist real constants $\lambda, \lambda'$ such that: \[
    (x - y) = \lambda(x - z) = \lambda'(y - z)
    \]
\end{definition}

\begin{definition}
    Given two vector spaces $V, W$ and subsets $X \subseteq V$ and $Y \subseteq W$,  an invertible transformation $f: X \to Y$ is called a collineation if for every three collinear points $x, y, z \in X$, their images $f(x), f(y), f(z)$ are also collinear. The group of collineations of $W$ is denoted $\Coll(W)$.
\end{definition}

\begin{proposition}[{\cite{goldman2022geometric}}]\label{prop:collineations-are-isometries}
    For an open convex bounded set $C \subseteq \mathbb{R}^n$ equipped with the Hilbert geometry $\dh$, every collineation of $C$ is also an isometry of $C$. That is, $\Coll(C) \subseteq \Isom(C)$.
\end{proposition}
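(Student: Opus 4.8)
The plan is to reduce the isometry property to the projective invariance of the cross-ratio. First I would rewrite the Hilbert distance of \Cref{def:hilbert-dist} as the logarithm of the cross-ratio of the four collinear points $x', x, y, y'$ lying in this order on the chord $l_{xy}$ (with $\{x',y'\} = \partial C \cap l_{xy}$); with the ordering convention of the definition this is exactly $d_H(x,y) = \log[x',x,y,y']$, so the entire problem is isolated into the single scalar quantity we must show is preserved, namely the cross-ratio along chords.

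Next I would track how a collineation $f$ carries the configuration at $(x,y)$ to the configuration at $(f(x),f(y))$. Since $f$ preserves collinearity and $C$ is convex, $f$ maps the chord $l_{xy}\cap C$ onto the chord $l_{f(x)f(y)}\cap C$ and carries $x,y$ to $f(x),f(y)$. The delicate point is the behaviour at the two boundary endpoints: I would need that $f$ extends continuously to $\overline{C}$ and sends $\partial C \cap l_{xy}$ to $\partial C \cap l_{f(x)f(y)}$, so that the images of $x',y'$ are precisely the endpoints $f(x)',f(y)'$ of the image chord. This is what lets the four relevant points match up on the image side.

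The crux is to upgrade the bare collinearity-preservation of $f$ to genuine projectivity. Here I would invoke the fundamental theorem of projective geometry, valid precisely because $n \geq 2$ (as assumed throughout this section) and because $\mathbb{R}$ admits no nontrivial field automorphism: a collineation of the open convex domain $C \subseteq \mathbb{R}^n \subset \mathbb{RP}^n$ is the restriction of a single projective transformation $F \in \PGL(n+1,\mathbb{R})$. As a projective transformation, $F$ is a homeomorphism carrying lines to lines that, crucially, preserves cross-ratios of collinear quadruples. Because $F$ restricts to the bijection $f \colon C \to C$, it maps $\overline{C}$ to $\overline{C}$ and hence $\partial C$ to $\partial C$, which retroactively justifies the boundary claim of the previous paragraph.

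Combining these steps, $F$ sends the ordered quadruple $(x',x,y,y')$ on $l_{xy}$ to the ordered quadruple $(f(x)',f(x),f(y),f(y)')$ on $l_{f(x)f(y)}$ and preserves their cross-ratio; taking logarithms yields $d_H(x,y) = d_H(f(x),f(y))$, so $f \in \Isom(C)$ and $\Coll(C) \subseteq \Isom(C)$. The main obstacle is the projectivity upgrade, that is, establishing (or correctly citing from~\cite{goldman2022geometric}) that a collineation of the \emph{domain} $C$, rather than of all of $\mathbb{RP}^n$, is the restriction of a global projective map. The standard route is to observe that through each interior point of $C$ pass enough distinct chords for their images under $f$ to determine a projective frame, so the local collinearity data assemble into a unique element of $\PGL(n+1,\mathbb{R})$; this is exactly where $n \geq 2$ is indispensable, since on a single line every bijection trivially preserves collinearity and no such rigidity is available.
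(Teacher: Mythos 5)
The paper offers no proof of this proposition at all: it is quoted verbatim as a known result from~\cite{goldman2022geometric}, so your argument can only be judged on its own merits, not against an internal one. Its skeleton is the standard and correct one: the Hilbert distance of \Cref{def:hilbert-dist} is the logarithm of a cross-ratio of the ordered quadruple $(x',x,y,y')$; a projective transformation $F\in\PGL(n+1,\bbR)$ with $F(C)=C$ is a homeomorphism of $\mathbb{RP}^n$, hence maps $\overline{C}$ to $\overline{C}$ and $\partial C$ to $\partial C$, carries the chord through $x,y$ to the chord through $F(x),F(y)$ together with its boundary endpoints, and preserves the cross-ratio (up to the harmless reversal of the quadruple, under which this symmetric form of the cross-ratio is invariant); taking logarithms gives the isometry property.

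The genuine soft spot is precisely the step you flag as "the main obstacle": upgrading a collineation of the \emph{domain} $C$ to an element of $\PGL(n+1,\bbR)$. The fundamental theorem of projective geometry, as classically stated, applies to collineations of all of $\mathbb{RP}^n$ and cannot be invoked directly for a map defined only on $C$; you need an extension theorem first, and your closing sketch (assembling projective frames from chords through interior points) gestures at one without proving it. The paper itself later supplies the correct tool, Shiffman's lemma (\Cref{lemma:collineations-extend}), which extends a \emph{continuous} injective line-preserving map on an open $U\subseteq\mathbb{RP}^n$, $n\geq 2$, to a global collineation, to which the fundamental theorem (\Cref{thm:fundamental}) then applies. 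Note the mismatch in hypotheses: Shiffman's lemma assumes continuity, whereas the paper's definition of a collineation of $C$ is a bare bijection preserving collinear triples, with no continuity required. So a complete write-up must either add continuity as a hypothesis (harmless for this paper's application, where the collineations ultimately considered are restrictions of linear maps, and consistent with your appeal to~\cite{goldman2022geometric}, where domain automorphisms are projective by definition) or prove that a collinearity-preserving bijection of a bounded open convex domain in dimension $n\geq 2$ is automatically continuous. With that one step made precise, your proof is correct; your observation that $n\geq 2$ is indispensable (on a single chord every bijection preserves collinearity) is also exactly right.
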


The converse of~\Cref{prop:collineations-are-isometries} is not true in general. But we establish it for $\VPM(n)$ in particular, using the theory developed in~\cite{walshGaugereversingMapsCones2017}.

\subsection{Isometries of VPMs are collineations}

\begin{definition}[Lorentzian cone]
    A cone $C \subseteq \mathbb{R}^n$ is called Lorentzian if it's of a form $C = \left\{(x_1, \ldots x_n) : x_1 > 0,\ x_1^2 - \sum_{i > 1}x_i^2 > 0 \right\}$
\end{definition}

\begin{definition}[Proper cone]
    An open convex cone $C$ in a linear space $V$ is called proper if it does not contain any subspace of $V$, or in other words if we have both $x \in C$ and $-x \in C$ iff $x = 0$.
\end{definition}

\begin{definition}[Dual cone]
    For a cone $C$ in a linear space $V$ equipped with an inner product $\langle \cdot, \cdot \rangle$, its dual cone is defined to be \[
    C^* = \{y \in V : \langle x, y \rangle > 0 \text{ for all } x \in C\}
    \]
\end{definition}

\begin{definition}[Homogeneous cone]
    A cone $C \subseteq V$ is said to be homogeneous, if for every two points $x, y \in C$ there exists a linear automorphism $F$ of $V$, such that $F$ restricts to an automorphism of $C$, and $F(x) = y$.
\end{definition}

\begin{definition}[Symmetric cone]
    An open convex proper cone $C$ is said to be symmetric if it is homogeneous, and equal to its dual $C = C^*$.
\end{definition}

\begin{theorem}[{\cite{walshGaugereversingMapsCones2017}}, Corollary 1.4]\label{thm:iso-coll}
    Given a cone $C$ and $D = \mathbf{P}(C)$ its projective space, we have that $\Isom(D)$ is a normal subgroup of $\Coll(D)$ if and only if $C$ is symmetric and non-Lorentzian. Otherwise, $\Isom(D) = \Coll(D)$.
\end{theorem}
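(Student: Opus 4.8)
The plan is to realize $\Coll(D)$ as the kernel of a canonical homomorphism $\sigma\colon\Isom(D)\to\mathbb{Z}/2\mathbb{Z}$ recording whether an isometry preserves or reverses the cone's gauge, following the horofunction-boundary method of \cite{walshGaugereversingMapsCones2017}. Since \Cref{prop:collineations-are-isometries} already gives $\Coll(D)\subseteq\Isom(D)$, the whole content is to understand the possible extra isometries and their group-theoretic position. First I would set up the Funk/gauge decomposition of the Birkhoff metric: writing $M(x,y)=\inf\{\lambda>0 : x\preceq\lambda y\}$ for the gauge of $C$, the Funk metric is $d_F(x,y)=\log M(x,y)$, its reverse is $\overleftarrow{d_F}(x,y)=d_F(y,x)$, and the Hilbert metric decomposes as $d_H=d_F+\overleftarrow{d_F}$. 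This decomposition is the bridge between the metric geometry of $D$ and the linear-projective structure of $C$.

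Next I would pass to the horofunction compactification of $(D,d_H)$. The key technical input is that its boundary canonically splits into \emph{Funk-type} and \emph{reverse-Funk-type} pieces, and that the detour metric on the boundary is an isometry invariant. Every $g\in\Isom(D)$ extends to a homeomorphism of the horofunction boundary preserving the detour metric, hence either preserves this splitting or swaps its two pieces; setting $\sigma(g)=0$ in the first case and $\sigma(g)=1$ in the second defines the homomorphism $\sigma$, whose kernel is exactly the gauge-preserving isometries.

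I would then identify $\ker\sigma$ with $\Coll(D)$: a gauge-preserving isometry preserves $d_F$ itself, not merely its symmetrization, and the rigidity of Funk isometries forces it to be induced by a linear automorphism of $C$, i.e. a collineation; conversely collineations are gauge-preserving. Hence $\Coll(D)=\ker\sigma$ is normal in $\Isom(D)$ with quotient embedding into $\mathbb{Z}/2\mathbb{Z}$, and everything reduces to deciding when $\sigma$ is surjective, i.e. when a gauge-reversing isometry exists. Such a map dualizes to a gauge-reversing bijection $C\to C^{*}$, which by Walsh's characterization exists precisely when $C$ is symmetric (self-dual and homogeneous), the duality/inversion map furnishing it. The Lorentzian case is the exception: a Lorentzian cone is symmetric, but its Hilbert geometry is the Klein model of hyperbolic space, whose full isometry group $\mathrm{PO}(n,1)$ is already realized by projective maps; there the gauge-reversing isometry is itself a collineation, $\sigma$ is trivial, and $\Isom(D)=\Coll(D)$. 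Combining these, $\sigma$ is onto, so that $\Coll(D)$ is a proper index-two normal subgroup of $\Isom(D)$, exactly when $C$ is symmetric and non-Lorentzian, and $\Isom(D)=\Coll(D)$ in every other case.

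The main obstacle is the horofunction-boundary analysis: proving that the Funk/reverse-Funk decomposition of the boundary is canonical and detour-metric invariant, so that every isometry respects it up to a swap, together with the rigidity statement that a genuinely gauge-preserving isometry must be linear. This structural dichotomy is the technical heart supplied by \cite{walshGaugereversingMapsCones2017}; once it is in place, the symmetry characterization of gauge-reversing maps and the explicit Lorentzian computation are comparatively concrete.
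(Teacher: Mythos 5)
This theorem is imported by the paper from \cite{walshGaugereversingMapsCones2017} (Corollary~1.4) without proof, so there is no in-paper argument to compare against; your sketch is, in effect, a reconstruction of Walsh's own proof, and it is a faithful one: the Funk/reverse-Funk decomposition of the Hilbert metric, the detour-metric analysis of the horofunction boundary giving the preserve-or-swap dichotomy, the identification of gauge-preserving isometries with collineations, the existence of gauge-reversing maps precisely for symmetric cones, and the Lorentzian exception (where the gauge-reversing map is itself projective, the two boundary pieces coinciding for hyperbolic space) are exactly the ingredients of Walsh's argument. The technical heart --- the canonicity of the boundary splitting and the linear rigidity of gauge-preserving isometries --- is deferred to \cite{walshGaugereversingMapsCones2017} rather than proved, which is the appropriate standard for a cited result. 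One point in your favour: the subgroup relation you derive, namely that $\Coll(D)$ is a normal, index-two subgroup of $\Isom(D)$ in the symmetric non-Lorentzian case, is the correct orientation (consistent with $\Coll(D)\subseteq\Isom(D)$ from \Cref{prop:collineations-are-isometries} and with Walsh's actual statement); the theorem as printed in the paper reverses the inclusion, writing that $\Isom(D)$ is normal in $\Coll(D)$, which appears to be a typo.
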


We apply this theorem to our space.

\begin{definition}[Cone $C_n$ over $\VPM(n)$]\label{def:vpm-cone}
    The cone over $\VPM(n)$ is given by:
    \[
    C_n = K(\VPM(n)) = \left\{(tX, t) : X \in \VPM(n),\ t \in \bbRpos\right\} = \left\{(Y, t) : \frac{1}{t}Y \in \VPM(n), \ t \in \bbRpos\right\}
    \]
    for which $\VPM(n) = \mathbf{P}(C_n)$ is the projective space - in the projective coordinates we have $\VPM(n) \simeq [X: 1]$.
\end{definition}

\begin{proposition}\label{prop:VPM-not-symmetric}
    The cone $C_n$ for $n > 1$ is not symmetric.
\end{proposition}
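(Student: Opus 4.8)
The plan is to deduce non-symmetry from \emph{non-self-duality}, since a symmetric cone must in particular be self-dual. First I would record the elementary reduction: if $C_n = C_n^\ast$ with respect to some inner product $\langle\cdot,\cdot\rangle_0 = \langle\cdot, S\,\cdot\rangle$ with $S \succ 0$, then the dual taken with respect to the standard inner product on $\Sym(n)\times\bbR$ satisfies $C_n^\ast = S(C_n)$, so $C_n^\ast$ is a \emph{linear image} of $C_n$. A linear isomorphism carries extreme rays to extreme rays homeomorphically, hence preserves the number of connected components of the extreme-ray set of each dimension, in particular the number of isolated (zero-dimensional) extreme rays. So it suffices to exhibit a mismatch in this count between $C_n$ and $C_n^\ast$.

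Second, I would compute the extreme rays of $C_n = \{(Y,t) : 0 \prec Y \prec tI\}$; they correspond to the extreme points of the base $\barVPM(n) = \{X \in \Sym(n) : 0 \preceq X \preceq I\}$. A standard operator-interval argument shows these are exactly the orthogonal projections: any $X$ with an eigenvalue in $(0,1)$ is a nontrivial midpoint along the corresponding eigendirection, while a projection $P = \tfrac12(X_1+X_2)$ with $0\preceq X_i\preceq I$ forces $X_i = P$ on $\range(P)$ and on its orthogonal complement. The projections split by rank into $\coprod_{k=0}^n \mathrm{Gr}(k,n)$, and since the trace is continuous and equal to $k$ on the rank-$k$ stratum, these strata are distinct connected components of the extreme-ray set. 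For $n \ge 2$ exactly two of them are singletons, namely $k=0$ (the apex $(0,1)$) and $k=n$ (the apex $(I,1)$), while each $\mathrm{Gr}(k,n)$ with $1 \le k \le n-1$ has dimension $k(n-k) \ge 1$. Hence $C_n$ has exactly two isolated extreme rays.

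Third, I would compute the dual and count its isolated extreme rays. Evaluating the condition $\tr(PZ) + s \ge 0$ over the extreme rays $(P,1)$ of $\overline{C_n}$ gives $C_n^\ast = \{(Z,s) : s \ge \sum_{\lambda_i(Z)<0}|\lambda_i(Z)|\}$. The group $\mathrm{SO}(n)$ acts on $C_n^\ast$ by the linear automorphisms $(Z,s)\mapsto(U^\top Z U, s)$. If some extreme ray of $C_n^\ast$ were isolated, its connected $\mathrm{SO}(n)$-orbit would reduce to the ray itself, i.e. the ray would be $\mathrm{SO}(n)$-fixed; for $n \ge 2$ this forces $Z = cI$. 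But no ray $(cI,s)$ is extreme: for $c \ge 0$ it is interior unless $s=0$, and $(cI,0)=\sum_i(c\,e_ie_i^\top,0)$ is a nontrivial sum of boundary rays; for $c < 0$ it lies in the relative interior of the flat facet $\{(Z,-\tr Z): Z \prec 0\}$. Thus $C_n^\ast$ has no isolated extreme rays, contradicting the count $2$ for $C_n$, and so $C_n$ is not self-dual, hence not symmetric.

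I expect the main obstacle to be this third step: writing down $C_n^\ast$ correctly and, above all, rigorously certifying that it has no isolated extreme rays. The $\mathrm{SO}(n)$-symmetry argument reduces the question to the scalar matrices $Z=cI$, and here the hypothesis $n \ge 2$ is essential — for $n=1$ one has $C_1 \cong \bbRpos^2$, which is self-dual, so any correct argument must break down exactly at $n=1$, as this one does since $\mathrm{SO}(1)$ is trivial. Should the dual computation prove delicate, an alternative is to invoke the classification of symmetric cones as products of simple Euclidean Jordan cones: the two isolated extreme rays would force exactly two $\bbRpos$-factors, spanning a $2$-plane $W$, yet the span of the rank-one projection rays $\{(vv^\top,1)\}$ is the hyperplane $\{(S,\tr S) : S \in \Sym(n)\}$ of dimension $n(n+1)/2$, which cannot lie in the complement of $W$ of dimension $n(n+1)/2 - 1$.
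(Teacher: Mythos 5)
Your proof is correct, and it diverges from the paper's argument at exactly the point where the paper is weakest. Both proofs begin identically: you and the paper compute the same standard (trace) dual, $C_n^* = \{(Z,s) : s \ge \tr(Z_-)\}$ with $\tr(Z_-) = \sum_{\lambda_i(Z)<0}|\lambda_i(Z)|$. The paper then simply observes that this set is ``clearly different'' from $C_n$ and stops; taken literally, that only rules out self-duality with respect to the trace inner product. But the notion of symmetric cone relevant to Walsh's result (\Cref{thm:iso-coll}) requires self-duality with respect to \emph{some} inner product --- a cone can differ from its standard dual and still be symmetric (e.g.\ any sheared orthant $A(\bbRpos^n)$, $A$ invertible), so set-inequality alone does not suffice. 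Your proof closes this gap: self-duality with respect to $\langle\cdot,S\,\cdot\rangle$, $S \succ 0$, would force $C_n^* = S(C_n)$, so it is enough to exhibit a linear-isomorphism invariant separating $C_n$ from $C_n^*$, and your invariant --- the number of isolated extreme rays --- is computed correctly on both sides: the extreme points of $\barVPM(n)$ are exactly the orthogonal projections, stratified by rank into Grassmannians of which only $k=0$ and $k=n$ are singletons (count $2$), while for $C_n^*$ the connectedness of $\mathrm{SO}(n)$-orbits pins any isolated extreme ray to scalar matrices $(cI,s)$, none of which spans an extreme ray (count $0$). In short, the paper buys brevity (a three-line set comparison) at the price of proving a statement weaker than what \Cref{thm:iso-coll} needs, whereas your argument is longer but establishes the inner-product-independent claim, and indeed the stronger fact that $C_n^*$ is not linearly isomorphic to $C_n$ at all. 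Two points worth tightening in a final write-up: say explicitly that extreme rays are taken in the closed cones $\overline{C_n}$ and $\overline{C_n^*}$ (the open-cone duality conventions are loose otherwise), and in the case $c > 0$, $s = 0$ note that $n \ge 2$ is what makes the decomposition $(cI,0) = \sum_i (c\, e_i e_i^\top, 0)$ a nontrivial certificate of non-extremality.
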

\begin{proof}
    We form the dual cone:\[
    C_n^* = \{(Y, s) \in \Sym(n) \times \mathbb{R} : \langle (X, t), (Y, s) \rangle > 0 \text{ for all } (X, t) \in C \}
    \]
    where the inner product is induced on the product space as:
    \[
    \langle (X, t), (Y, s) \rangle = \langle X, Y \rangle + ts = \text{tr}(XY) + ts
    \]
    We denote $Z = \frac{1}{t}X$, where $Z \in \VPM(n)$ and $t > 0$ are now arbitrary, and write:
    \[
    t(\text{tr}(ZY) + s) > 0
    \]
    which is equivalent to $\text{tr}(ZY) + s> $. Because $\VPM(n)$ is invariant under conjugation with orthonormal matrices as~\Cref{prop:invariance-conjugation} showed, we can diagonalize $Y = Q^\top\text{diag}(\lambda_1, \ldots, \lambda_n)Q$ and compute the infimum:
    \begin{align*}
    \inf_{Z \in \VPM(n)} \text{tr}(ZY)
    &= \inf_{Z \in \VPM(n)} \sum_{i = 1}^n \lambda_i q_i^TZq_i \\
    &\geq \inf_{Z \in \VPM(n)} \sum_{\lambda_i < 0} \lambda_i q_i^TZq_i \\
    &\geq \sum_{\lambda_i < 0} \lambda_i  = -\text{tr}(Y_{-})   
    \end{align*}
    and the infimum is achieved by letting $Z \to Q^T\text{diag}({\mathbbm{1}_{\lambda_1 < 0}, \ldots, \mathbbm{1}_{\lambda_n < 0}})Q$ (which is attained in the closure of $\VPM(n)$ in $\Sym(n)$). This means that the sufficient and necessary condition defining the dual cone is: \[
    C_n^* = \{(Y, s) \in \Sym(n)\times \mathbb{R} : s > \text{tr}(Y_{-})\}
    \]
    which is clearly different from $\VPM(n)$, since it does not even depend on the positive eigenvalues of $Y$.
\end{proof}

Therefore, from~\Cref{prop:VPM-not-symmetric} and~\Cref{thm:iso-coll} we derive the desired characterisation.
\begin{corollary}\label{prop:isometries-are-collineations}
We have $\Isom(\VPM(n)) = \Coll(\VPM(n))$ for all $n \geq 2$.
\end{corollary}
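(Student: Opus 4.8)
The plan is to obtain the corollary as a direct assembly of two ingredients already in place, rather than as a fresh computation. One inclusion, $\Coll(\VPM(n)) \subseteq \Isom(\VPM(n))$, is handed to us by \Cref{prop:collineations-are-isometries} and requires nothing further. The reverse inclusion is the content, and for it I would invoke the structural dichotomy of \Cref{thm:iso-coll}, arranging matters so that we fall into its favorable branch.

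First I would make the identifications explicit, since \Cref{thm:iso-coll} is phrased in terms of a cone $C$ and its projective quotient $D = \mathbf{P}(C)$. By \Cref{def:vpm-cone} the cone $C_n = K(\VPM(n))$ has projective quotient $\mathbf{P}(C_n) \simeq \VPM(n)$ via the slice $[X:1]$, and by \Cref{prop:birkoff-hilbert-match} the Birkhoff metric on $\mathbf{P}(C_n)$ is isometric to the Hilbert metric on $\VPM(n)$. Under this identification $\Isom(\mathbf{P}(C_n)) = \Isom(\VPM(n))$ and $\Coll(\mathbf{P}(C_n)) = \Coll(\VPM(n))$, so \Cref{thm:iso-coll} applies verbatim with $C = C_n$ and $D = \VPM(n)$.

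Next I would read off the correct branch. \Cref{thm:iso-coll} asserts that $\Isom(D)$ is a proper normal subgroup of $\Coll(D)$ exactly when $C$ is both symmetric and non-Lorentzian, and that in every other case the two groups coincide. Since \Cref{prop:VPM-not-symmetric} shows that $C_n$ is not symmetric for $n \geq 2$, the conjunction ``symmetric and non-Lorentzian'' already fails, which places us unconditionally in the ``otherwise'' case. In particular we never need to decide whether $C_n$ is Lorentzian, so no additional case analysis is required. The theorem then yields $\Isom(\VPM(n)) = \Coll(\VPM(n))$, as claimed.

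The genuinely difficult work has thus been front-loaded into the dual-cone computation of \Cref{prop:VPM-not-symmetric} and into the deep classification borrowed from Walsh in \Cref{thm:iso-coll}; the corollary itself is a short deduction. The only point that warrants care is logical rather than technical: the ``if and only if'' in \Cref{thm:iso-coll} must be read as a statement about \emph{proper} normality (strict containment), because the whole group is always normal in itself and would otherwise trivialize the criterion. Once that reading is fixed, negating the symmetry hypothesis cleanly forces equality of the two groups, and the proof is complete.
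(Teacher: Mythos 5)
Your proof is correct and follows exactly the paper's route: the paper likewise derives the corollary directly from \Cref{prop:VPM-not-symmetric} (non-symmetry of $C_n$) together with \Cref{thm:iso-coll}, landing in the ``otherwise'' branch. Your added remarks---the explicit identification $\mathbf{P}(C_n) \simeq \VPM(n)$ and the reading of the ``if and only if'' as concerning \emph{proper} normality---are careful clarifications of points the paper leaves implicit, not a different argument.
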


\subsection{Classification of collineations of VPM}

Let us classify the collineations of $\VPM(n)$. To do that, we use the fact that $\VPM(n)$ can be seen as the projective space of its cone $\mathbf{P}(C_n)$, as in in~\Cref{def:vpm-cone}.

\begin{theorem}[Fundamental theorem of projective geometry]\label{thm:fundamental}
     Any bijective collineation $\hat{f}: \mathbb{RP}^n \to \mathbb{RP}^n$ is induced by some linear isomorphism $f: \mathbb{R}^{n+1} \to \mathbb{R}^{n+1}$, such that $\hat{f} = [f]$, where $[f][x] = [fx]$.
 \end{theorem}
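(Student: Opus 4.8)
The plan is to reduce the statement to the single assertion that a collineation fixing a projective frame pointwise must be the identity, and then to establish that by reconstructing the field operations of $\bbR$ synthetically. First I would fix the standard projective frame in $\mathbb{RP}^n$: the $n+1$ fundamental points $[e_0],\ldots,[e_n]$ together with the unit point $[e_0+\cdots+e_n]$, a configuration of $n+2$ points in general position (no $n+1$ of them lying in a common hyperplane). Since $\hat f$ is a bijective collineation it carries lines to lines and hence induces a dimension-preserving automorphism of the lattice of projective subspaces; in particular it maps this frame to another frame, i.e. to $n+2$ points again in general position. Because $\PGL(n+1,\bbR)$ acts simply transitively on projective frames, there is a $g\in\GL(n+1,\bbR)$ with $[g]$ carrying the image frame $\hat f([e_0]),\ldots$ back to the standard one; equivalently $[g]\circ\hat f$ fixes every frame point. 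Replacing $\hat f$ by $[g]\circ\hat f$, it therefore suffices to show such a map is the identity, for then the original map equals $[g]^{-1}=[g^{-1}]$, manifestly induced by a linear isomorphism.

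To analyse a frame-fixing collineation I would use von Staudt's algebra of throws. Restricting attention to the fundamental line through $[e_0]$ and $[e_1]$ and using the remaining frame points as auxiliary data, one defines addition and multiplication of the affine coordinate on that line purely in terms of incidence: the sum and the product of two points are obtained as intersections of lines drawn through fixed reference points, constructions that invoke only collinearity. Since $\hat f$ preserves collinearity and fixes all the reference points, it commutes with these constructions, so the map $\sigma:\bbR\to\bbR$ it induces on the coordinate of the line is a field automorphism. The same reasoning on each fundamental line, with the general-position hypothesis tying the lines through the unit point, shows that $\hat f$ acts on homogeneous coordinates as the $\sigma$-semilinear map $[x_0:\cdots:x_n]\mapsto[\sigma(x_0):\cdots:\sigma(x_n)]$. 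Verifying that these geometrically defined sum and product really satisfy the field axioms and are respected by $\hat f$ is the main obstacle, and it is exactly where the hypothesis that the projective dimension is at least $2$ (so that enough auxiliary lines exist) enters.

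Finally I would invoke the triviality of $\operatorname{Aut}(\bbR)$. Any field automorphism $\sigma$ fixes $\mathbb{Q}$ pointwise and sends squares to squares; since $a\le b$ holds iff $b-a$ is a square in $\bbR$, the map $\sigma$ is order-preserving, and an order-preserving map fixing the dense subfield $\mathbb{Q}$ must be the identity. Hence $\sigma=\mathrm{id}$, the semilinear map above is genuinely linear and acts as the identity on $\mathbb{RP}^n$, so the frame-fixing $\hat f$ is the identity. Unwinding the reduction yields $\hat f=[g^{-1}]$, proving that the original collineation is induced by a linear isomorphism of $\bbR^{n+1}$. It is precisely because $\operatorname{Aut}(\bbR)$ is trivial that no semilinear correction survives and the inducing map can be taken to be linear rather than merely semilinear.
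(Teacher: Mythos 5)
Your situation here is special: the paper offers no proof of this statement at all. It is the classical Fundamental Theorem of (real) Projective Geometry, stated as a known result and used as a black box, together with \Cref{lemma:collineations-extend}, to reduce the classification of collineations of $\VPM(n)$ to the classification of linear automorphisms of the cone $C_n$. So there is no paper proof to compare against; what you have produced is the standard von Staudt--Artin proof of the classical theorem, and it is correct in outline. The three pillars are all in place: (i) simple transitivity of $\PGL(n+1)$ on projective frames, which reduces the problem to showing that a frame-fixing collineation is the identity; (ii) von Staudt's algebra of throws, which shows that a frame-fixing collineation acts on each fundamental line through a field automorphism $\sigma$ of $\mathbb{R}$ and globally as the induced $\sigma$-semilinear map; and (iii) triviality of $\operatorname{Aut}(\mathbb{R})$ (automorphisms fix $\mathbb{Q}$, send squares to squares and hence preserve the order, so they are the identity), which is exactly what upgrades ``semilinear'' to ``linear'' over $\mathbb{R}$. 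Two points deserve more care than you give them. First, with the paper's definition of collineation (only \emph{forward} preservation of collinearity is required), your opening claim that $\hat f$ ``induces a dimension-preserving automorphism of the lattice of projective subspaces'' is not immediate: one must rule out that $\hat f$ collapses a plane into a line. The standard argument uses bijectivity --- if three non-collinear points had collinear images, an induction on spans shows all of $\mathbb{RP}^n$ would map into a proper flat, contradicting surjectivity --- and only after this do frames go to frames and does the simple-transitivity reduction apply. Second, the hypothesis $n \geq 2$ is essential (for $n = 1$ every bijection of $\mathbb{RP}^1$ is vacuously a collineation and the theorem fails); the statement as isolated does not carry this hypothesis, but the paper assumes $n \geq 2$ throughout Section~\ref{sec:char}, and you correctly flag the point where it enters.
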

 
\begin{definition}[$\PGL(n)$]
    By $\PGL(n)$ we denote the projective general linear group of dimension $n$, that is, the group $\GL(\mathbb{R}, n) / \GL(\mathbb{R}, 1)$. That is, $F \sim G$ iff $F = aG$ for some real constant $a \neq 0$.
\end{definition}

\begin{proposition}[{\cite{shiffmanSYNTHETICPROJECTIVEGEOMETRY1995}}, Lemma 4]\label{lemma:collineations-extend}
    Given an open set $U \subseteq \mathbb{RP}^n$ for $n \geq 2$, and a continuous injective map $f: U \to \mathbb{RP}^n$, such that for all projective lines $L \subseteq  \mathbb{RP}^n$, the set $f(L\cap U)$ is collinear, then there exist a collineation $\hat{f} : \mathbb{RP}^n \to \mathbb{RP}^n$ such that $\hat{f}|_U = f$.
\end{proposition}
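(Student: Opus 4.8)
The plan is to reconstruct a single global collineation of $\mathbb{RP}^n$ out of the local incidence data that $f$ carries, and then verify it agrees with $f$ on all of $U$. The hypothesis first yields the clean \emph{local collineation} property: if $x, y, z \in U$ are collinear they lie on a unique projective line $L$, so $f(x), f(y), f(z) \in f(L \cap U)$, which is collinear by assumption; hence $f$ sends collinear triples of $U$ to collinear triples. By the standard incidence lemmas underlying the fundamental theorem of projective geometry, this line-preservation propagates upward to all flats, so that the image of each $k$-plane slice $P \cap U$ is contained in a $k$-plane; in particular $f\bigl(\mathrm{span}(S) \cap U\bigr) \subseteq \mathrm{span}\bigl(f(S)\bigr)$ for any finite $S \subseteq U$, so $f$ respects the local incidence lattice.

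Next I would fix a projective frame $p_0, \ldots, p_{n+1} \in U$ in general position, which is possible because $U$ is open and nonempty and general position is an open, dense condition. By Brouwer's invariance of domain $f$ is an open map, so $f(U)$ is open and therefore contained in no hyperplane. If some $n+1$ of the image points $f(p_i)$ spanned a proper subspace $H$, then since the corresponding $p_i$ already span $\mathbb{RP}^n$, flat-preservation would force $f(U) \subseteq H$, contradicting openness; hence the images $f(p_0), \ldots, f(p_{n+1})$ are again in general position and form a frame. Applying \Cref{thm:fundamental} then produces a unique collineation $\hat{f}$ of $\mathbb{RP}^n$ with $\hat{f}(p_i) = f(p_i)$ for every $i$.

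It remains to prove $\hat{f}|_U = f$. Setting $g \eqdef \hat{f}^{-1} \circ f$, we obtain a continuous injective local collineation of $U$ fixing the frame pointwise, and I would argue $g = \mathrm{id}$ by the rigidity of projective frames. The harmonic-conjugate construction via complete quadrangles expresses every \emph{projectively rational} point, relative to the frame, through finitely many line intersections; since $g$ preserves these incidences and fixes the frame, it fixes each such point. Because $\mathbb{R}$ admits no nontrivial field automorphism, the projectively rational points are dense, so the continuity of $g$ upgrades this to $g = \mathrm{id}$ on $U$, giving $\hat{f}|_U = f$ as claimed.

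The main obstacle is this last rigidity step, and specifically the bookkeeping of \emph{where} the auxiliary construction lives. In the classical global theorem the quadrangle vertices and the lines joining them may be chosen anywhere in $\mathbb{RP}^n$, whereas here every point used to pin down a harmonic conjugate must lie inside the given open set $U$, and the joining lines must meet $U$ in arcs on which $f$ is already controlled. Guaranteeing that enough of the projectively rational net remains inside $U$ to be dense there, after reducing to a connected piece or to a convex affine chart contained in $U$, is exactly what distinguishes this local statement from its global ancestor and is where the substantive work of~\cite{shiffmanSYNTHETICPROJECTIVEGEOMETRY1995} is concentrated.
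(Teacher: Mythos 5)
The first thing to note is that the paper contains no proof of this proposition: it is imported verbatim as Lemma~4 of \cite{shiffmanSYNTHETICPROJECTIVEGEOMETRY1995}, so your attempt has to stand on its own as a complete argument, and it does not. Your outline follows the classical strategy (local collinearity, transport of a frame, fundamental theorem, rigidity via harmonic nets), but the two steps that carry all the difficulty in the \emph{local} setting are asserted rather than proven. First, the claim that line-preservation ``propagates upward to all flats,'' i.e.\ $f(\mathrm{span}(S)\cap U)\subseteq \mathrm{span}(f(S))$, does not follow from the ``standard incidence lemmas'': those lemmas manufacture auxiliary intersection points (e.g.\ the point where a line through $z$ meets the line joining two of the $p_i$), and in the local setting such points can fall outside $U$, where $f$ is undefined and the hypothesis is silent. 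Since your general-position argument for the image frame --- and hence the very existence of $\hat f$ --- rests on this flat preservation, the construction of $\hat f$ is already incomplete. Second, and more seriously, you explicitly concede the final rigidity step: showing that $g=\hat f^{-1}\circ f$ fixes a dense set requires realizing the complete-quadrangle constructions \emph{inside} $U$, and this is genuinely obstructed --- the harmonic conjugate of three points on a short arc typically lies far away on the line (at infinity in an affine chart, say), hence possibly outside $U$. Your closing paragraph names this as ``where the substantive work is concentrated,'' which is exactly right, but it means you have produced a plan, not a proof.

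Two further corrections. The sentence ``Because $\mathbb{R}$ admits no nontrivial field automorphism, the projectively rational points are dense'' is a non sequitur: density comes from density of $\mathbb{Q}$ in $\mathbb{R}$, while the automorphism fact serves elsewhere, to rule out semilinear collineations. Also, what produces $\hat f$ from two frames is not \Cref{thm:fundamental} (which goes the other way, from collineations to linear maps) but the elementary existence and uniqueness of a projective transformation carrying one frame to another. Finally, observe that the hypothesis as stated is strong --- $f(L\cap U)$ must be collinear even when $L\cap U$ is disconnected --- and a correct proof can exploit this: once $\hat f=f$ is known on a single small ball $B\subseteq U$, any $z\in U$ lies on two distinct lines each meeting $B$ in an open arc, and intersecting their images pins down $f(z)=\hat f(z)$. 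Your write-up never uses this mechanism, which is a symptom of the fact that the local-to-global passage, the actual content of the lemma, has not been engaged.
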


Combining~\Cref{lemma:collineations-extend} and~\Cref{thm:fundamental} instead of classifying collineations of $\VPM(n)$, we can instead classify the linear cone isomorphisms, as we do in the next section below.

\subsubsection{Classification of the projectivisations of linear cone isomorphism}

\begin{definition}[Linear cone isomorphism]
    For a vector space $V$ and a cone $C \subseteq V$, we call a map $L : V \to V$ a linear cone automorphism for $C$, if $L$ is a linear isomorphism and $L(C) = C$.

\end{definition}

We need the following lemmas.

\begin{lemma}[{\cite{gowdaAutomorphismGroupCompletely2013a}}, Example 1]\label{lemma:UTU}
    Linear cone isomorphism $\mathcal{A}: \Sym(n) \to \Sym(n)$ for $\PD(n)$ exactly correspond to conjugation with some $U \in \GL(\mathbb{R}, n)$, that is:
    \[
    \mathcal{A}(X) = U^TXU
    \]
\end{lemma}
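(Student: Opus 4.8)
The plan is to prove the two inclusions separately. The easy direction is that every congruence $\mathcal{A}(X) = U^\top X U$ with $U \in \GL(\bbR,n)$ is a linear cone automorphism of $\PD(n)$: it is linear in $X$ and invertible (its inverse is the congruence by $U^{-1}$), and by Sylvester's law of inertia a congruence preserves the inertia of a symmetric matrix, so $X \succ 0 \iff U^\top X U \succ 0$, giving $\mathcal{A}(\PD(n)) = \PD(n)$. All the content lies in the converse, so I would fix an arbitrary linear isomorphism $\mathcal{A}$ of $\Sym(n)$ with $\mathcal{A}(\PD(n)) = \PD(n)$ and show it must be a congruence.

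First I would pass to the extreme-ray structure of the cone. Since $\mathcal{A}$ is linear and maps the open cone onto itself, it maps $\PSD(n) = \overline{\PD(n)}$ onto itself bijectively, hence it permutes faces and extreme rays and, in particular, preserves the rank of a PSD matrix (the rank being the dimension of the smallest face containing it). The extreme rays of $\PSD(n)$ are exactly the rank-one matrices $x x^\top$, so for each $x \in \bbR^n \setminus \{0\}$ there are $\phi(x) \in \bbR^n$ (defined up to sign) and a scalar $\rho(x) > 0$ with $\mathcal{A}(x x^\top) = \rho(x)\, \phi(x)\phi(x)^\top$. Using $\operatorname{rank}\big(\sum_i x_i x_i^\top\big) = \dim \operatorname{span}(x_i)$, three points $[x],[y],[z]$ of $\mathbb{RP}^{n-1}$ are collinear iff $\operatorname{rank}(x x^\top + y y^\top + z z^\top) \le 2$; since both $\mathcal{A}$ and $\mathcal{A}^{-1}$ preserve rank, $\phi$ both preserves and reflects collinearity, i.e. it is a collineation of $\mathbb{RP}^{n-1}$. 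For $n \ge 3$, the Fundamental Theorem of projective geometry (\Cref{thm:fundamental}, applied in projective dimension $n-1$, and using that $\bbR$ admits no nontrivial field automorphisms) then produces $V \in \GL(\bbR,n)$ with $\phi(x) \in \bbR\, V x$, whence $\mathcal{A}(x x^\top) = \rho(x)\, V x x^\top V^\top$ after absorbing the projective scalars into $\rho$. The case $n = 2$ must be treated separately, since $\mathbb{RP}^1$ carries no nontrivial collinearity relation and the theorem is vacuous there; one can instead use the identification $\PSD(2) \cong \barcalL_3$ established earlier, whose linear cone automorphisms are the positively scaled Lorentz transformations, and check these coincide with congruences (equivalently, a direct computation on $\Sym(2) \cong \bbR^3$).

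Finally I would pin down the scalar function $\rho$. Replacing $\mathcal{A}$ by $\mathcal{B}(X) = V^{-1}\mathcal{A}(X) V^{-\top}$ (again a cone automorphism) yields $\mathcal{B}(x x^\top) = \rho(x)\, x x^\top$ for all $x$. Expanding $\mathcal{B}\big((x+y)(x+y)^\top\big)$ and $\mathcal{B}\big((x-y)(x-y)^\top\big)$ by linearity and comparing the coefficients of the three linearly independent matrices $x x^\top$, $y y^\top$, and $x y^\top + y x^\top$ (for $x,y$ independent) forces
\[
\rho(x+y) = \rho(x-y) = \rho(x) = \rho(y),
\]
so $\rho$ is a single positive constant $c$ on $\bbR^n \setminus \{0\}$ (for $n \ge 2$ any two nonzero vectors are joined through a third independent one). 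Hence $\mathcal{B} = c \cdot \mathrm{id}$ on all rank-one matrices, and since these span $\Sym(n)$ and $\mathcal{B}$ is linear, $\mathcal{B} = c\cdot\mathrm{id}$ everywhere. Unwinding gives $\mathcal{A}(X) = c\, V X V^\top = U^\top X U$ with $U = \sqrt{c}\, V^\top$, as claimed.

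The main obstacle is precisely this last rigidity step: the extreme-ray analysis only determines $\mathcal{A}$ up to the undetermined positive scalars $\rho(x)$, and it is the polarization identity that collapses them to one constant. A secondary obstacle is the $n = 2$ endpoint, where the projective-geometry black box of \Cref{thm:fundamental} is unavailable and must be replaced by the Lorentz-cone description.
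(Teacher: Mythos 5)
Your proposal is correct, but it takes a genuinely different route from the paper for a simple reason: the paper does not prove this lemma at all — it imports it as a black box, citing Example~1 of \cite{gowdaAutomorphismGroupCompletely2013a}. Your argument is the classical self-contained proof. The easy inclusion via Sylvester's law of inertia is fine; for the converse, the chain of reductions is sound: a linear automorphism of $\PD(n)$ extends to a bijection of $\PSD(n)$, hence permutes faces and preserves rank, hence maps extreme rays $\bbR_{>0}\,xx^\top$ to extreme rays; the rank characterization of collinearity ($[x],[y],[z]$ collinear iff $\mathrm{rank}(xx^\top+yy^\top+zz^\top)\le 2$) makes the induced map on $\mathbb{RP}^{n-1}$ a collineation in both directions; \Cref{thm:fundamental} (legitimately invoked for $n\ge 3$, and correctly noting $\bbR$ has no nontrivial field automorphisms) gives $\mathcal{A}(xx^\top)=\rho(x)(Vx)(Vx)^\top$; and the polarization step does collapse $\rho$ to a constant — comparing coefficients of the independent matrices $xx^\top$, $yy^\top$, $xy^\top+yx^\top$ yields $\rho(x+y)=\rho(x-y)$ and $\rho(x+y)+\rho(x-y)=2\rho(x)$, whence $\rho(x)=\rho(y)=\rho(x\pm y)$, and rank-one matrices span $\Sym(n)$. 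What your route buys is coherence and self-containedness: it reuses the fundamental theorem of projective geometry, which the paper states anyway to classify collineations. What the citation buys is brevity for a result that is classical in the theory of symmetric cones.

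The one soft spot is $n=2$, which you rightly flag but do not close: your substitute argument rests on another unproved black box, namely that the linear automorphisms of $\barcalL_3$ are exactly the positively scaled orthochronous Lorentz transformations and that all of these arise as congruences $X\mapsto U^\top XU$ with $U\in\GL(\mathbb{R},2)$. This needs either a short direct computation (e.g., an automorphism preserves the boundary quadric, hence preserves $\det$ up to a positive scalar; normalize so that $I\mapsto I$ and check the induced orthogonal map on trace-free matrices is realized by conjugation with a rotation or reflection) or its own citation. It cannot be waved away in this paper's context, since \Cref{lemma:UTU} is applied for all $n\ge 2$ — in particular for $n=2$ — in the classification leading to \Cref{thm:isochar}.
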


\begin{lemma}\label{lemma:curve-in-psd}
     Given a vector space $V$, a pointed closed cone $K$ such that $0 \in K$, and a smooth curve $\eta(t) \in K$, for $t \in (-\epsilon, \epsilon)$ and $\epsilon > 0$, if $\eta(0) = 0$, then necessarily $\eta'(0) = 0$.
\end{lemma}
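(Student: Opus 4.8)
The plan is to work directly with the difference quotients that define $\eta'(0)$ and to exploit the fact that a cone is, by definition, invariant under scaling by positive reals. Since $\eta$ is smooth and $\eta(0)=0$, the derivative is the two-sided limit $\eta'(0) = \lim_{t\to 0}\frac{\eta(t)-\eta(0)}{t} = \lim_{t\to 0}\frac{\eta(t)}{t}$, and in particular both one-sided limits exist and coincide with $\eta'(0)$. I would analyse the two signs of $t$ separately, extracting from each a membership statement for $\eta'(0)$, and then close the argument using pointedness.

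First I would treat $t>0$. For such $t$ the scalar $1/t$ is positive, so the cone property gives $\frac{1}{t}\eta(t)\in K$ because $\eta(t)\in K$. These vectors are exactly the right-hand difference quotients, and as $t\to 0^+$ they converge to $\eta'(0)$. Since $K$ is closed, the limit of a sequence of its elements remains in $K$, so $\eta'(0)\in K$. Next I would treat $t<0$. Now $1/t<0$, so writing $\frac{1}{t}\eta(t) = -\frac{1}{|t|}\eta(t)$ and using $\frac{1}{|t|}\eta(t)\in K$, we obtain $\frac{1}{t}\eta(t)\in -K$. Letting $t\to 0^-$, the left-hand difference quotients also converge to $\eta'(0)$, and since $-K$ is closed we conclude $\eta'(0)\in -K$. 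Combining the two cases gives $\eta'(0)\in K\cap(-K)$; because $K$ is pointed and closed, $K\cap(-K)=\overline{K}\cap(-\overline{K})=\{0\}$, hence $\eta'(0)=0$.

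There is no deep obstacle here: the argument is essentially the observation that a differentiable curve entering a cone at the apex must do so with a velocity lying in the cone (from the right) and in its reflection (from the left). The only points requiring care are bookkeeping ones, namely ensuring that the scaling factor has the correct sign in each case so that the difference quotient genuinely lands in $K$ respectively $-K$, and invoking closedness of $K$ to pass the membership through the limit. Pointedness is then exactly what forces the two opposite memberships to collapse to the zero vector, which is the conclusion of the lemma.
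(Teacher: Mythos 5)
Your proof is correct and follows essentially the same route as the paper's own argument: positive scaling of $\eta(t)$ by $1/t$ to form difference quotients, closedness of $K$ to pass to the limits $\eta'(0)\in K$ (from the right) and $\eta'(0)\in -K$ (from the left), and pointedness to conclude $\eta'(0)=0$. The only difference is cosmetic (the paper phrases the left-hand case as $-\eta'(0)\in K$ rather than $\eta'(0)\in -K$), so nothing further is needed.
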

\begin{proof}
    For any $t \in \bbRpos$, we have $\frac{\eta(t)}{t} \in K$, because $K$ is a cone. Since $K$ is closed, for $t > 0$, we also have \[
    \lim_{t \to 0^+}\frac{\eta(t)}{t} = \lim_{t\to 0^+}\frac{\eta(t) -\eta(0)}{t} = \eta'(0) \in K
    \]
    By a similar argument, this time taking $t < 0$, we have:
    \[
    \lim_{t \to 0^-}\frac{\eta(t)}{-t} = -\eta'(0) \in K
    \]
    But $K$ is pointed, so $K \cap (-K) = \{0\}$, and therefore $\eta'(0) = 0$.
\end{proof}

\begin{lemma}\label{lemma:affine-maps-vpm}
    Given an affine isomorphism $L: \Sym(n) \to \Sym(n)$ where $L(\VPM(n)) = \VPM(n)$, we either have:
    \[
        \begin{aligned}
            L(0) &= 0 \\
            L(I) &= I
        \end{aligned}\qquad \text{ or } \qquad 
        \begin{aligned}
            L(0) &= I \\
            L(I) &= 0
        \end{aligned}
    \]
\end{lemma}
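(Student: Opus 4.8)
The plan is to give an affinely invariant characterisation of the pair $\{0, I\}$ and to argue that any affine automorphism $L$ must permute it. Since $L : \Sym(n) \to \Sym(n)$ is a global affine isomorphism it is in particular a homeomorphism, so from $L(\VPM(n)) = \VPM(n)$ we obtain $L(\overline{\VPM(n)}) = \overline{L(\VPM(n))} = \overline{\VPM(n)}$, i.e. $L$ restricts to a bijection of the closed bicone. It therefore suffices to isolate $\{0, I\}$ inside $\overline{\VPM(n)}$ by a property that $L$ preserves; bijectivity then forces either $L(0) = 0,\ L(I) = I$ or $L(0) = I,\ L(I) = 0$, which is exactly the claim.

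The invariant I would use is the \emph{sharpness} of a point. For $p \in \overline{\VPM(n)}$ let $T_p$ be the set of velocities $\eta'(0)$ of smooth two-sided curves $\eta : (-\epsilon, \epsilon) \to \overline{\VPM(n)}$ with $\eta(0) = p$. Writing $L(X) = L_0(X) + B$ with $L_0$ the invertible linear part, the curve $L \circ \eta$ again lies in $\overline{\VPM(n)}$ and has velocity $L_0(\eta'(0))$, so $L_0(T_p) = T_{L(p)}$; as $L_0$ is a linear isomorphism, the condition ``$T_p = \{0\}$'' is preserved by $L$. The heart of the proof is then the claim that $T_p = \{0\}$ if and only if $p \in \{0, I\}$.

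For the forward implication I would invoke \Cref{lemma:curve-in-psd}. Since $\overline{\VPM(n)} \subseteq \PSD(n)$ and $\PSD(n)$ is a closed pointed cone containing $0$, any smooth curve in $\overline{\VPM(n)}$ through $0$ has vanishing velocity, giving $T_0 = \{0\}$; applying the lemma to $I - \eta(t) \in \PSD(n)$ gives $T_I = \{0\}$. For the reverse implication, take $p \in \overline{\VPM(n)} \setminus \{0, I\}$. If $p$ is scalar, $p = cI$, then necessarily $c \in (0,1)$, so $p$ is interior and every direction lies in $T_p$. Otherwise $p$ has two distinct eigenvalues $\alpha \neq \beta$ with unit eigenvectors $q, q'$, and rotating inside $\mathrm{span}(q, q')$ via $\eta(\theta) = R_\theta^\top\, p\, R_\theta$ preserves the spectrum, hence keeps $\eta(\theta) \in \overline{\VPM(n)}$, while differentiating at $\theta = 0$ yields $\eta'(0) = (\beta - \alpha)\bigl(q (q')^\top + q' q^\top\bigr) \neq 0$. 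In all cases $T_p \neq \{0\}$, so the claim holds and $\{0, I\}$ is precisely the set of points of $\overline{\VPM(n)}$ with $T_p = \{0\}$.

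The step I expect to require the most care is the reverse implication at the boundary points that are orthogonal projections — the ``equatorial'' points such as $\diag(0,1)$. There the first-order tangent cone already contains a line, so a naive first-order argument cannot separate them from the apices $0$ and $I$; it is essential to produce genuine two-sided smooth curves of nonzero velocity, which the conjugation-by-rotation construction supplies because conjugation fixes the spectrum and therefore never leaves $\overline{\VPM(n)}$. The sharpness of the two apices, by contrast, is exactly the content of \Cref{lemma:curve-in-psd}, so the two directions of the dichotomy are handled by complementary mechanisms: curve construction away from $\{0,I\}$, and the pointed-cone obstruction at $\{0,I\}$.
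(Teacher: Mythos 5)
Your proof is correct, and it rests on the same two technical pillars as the paper's own argument: \Cref{lemma:curve-in-psd} applied at the apices, and the conjugation curve $\eta(\theta)=R_\theta^\top p\, R_\theta$, whose derivative is the commutator $(\alpha-\beta)\bigl(q(q')^\top+q'q^\top\bigr)\neq 0$ at any point with two distinct eigenvalues, together with the elementary observation that scalar points other than $0,I$ admit two-sided motion. The difference is architectural. The paper argues by contradiction about the preimage of $0$ alone: assuming $L(X)=0$ with $X\notin\{0,I\}$, it kills the scalar case by a $\pm\beta I$ perturbation and the non-scalar case by the rotation curve; but this only shows $L^{-1}(0)\in\{0,I\}$, so the paper then needs a second, separate step --- the affine identity $L(0)+L(I)=L(X)+L(I-X)$ combined with the boundary constraints $C\preceq I$ and $I\preceq C$ to force $L(0)+L(I)=I$ --- before the dichotomy follows. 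You instead package the same computations into an affinely invariant characterization: $\{0,I\}$ is exactly the set of points $p\in\overline{\VPM(n)}$ with trivial velocity set $T_p$, and since $L_0(T_p)=T_{L(p)}$ with $L_0$ a linear isomorphism (the reverse inclusion coming from applying the same argument to $L^{-1}$), the bijection $L$ must permute this two-element set, which is precisely the claim. Your route treats $0$ and $I$ symmetrically from the outset (membership in $\PSD(n)$ at $0$, membership after the reflection $X\mapsto I-X$ at $I$) and thereby dispenses entirely with the paper's closing sum-identity argument; what it costs is the small extra care of verifying that sharpness is preserved in both directions, which you correctly handle via invertibility of $L_0$. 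Both proofs are complete; yours is arguably the cleaner organization of the same underlying mechanism.
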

\begin{proof}
    A general affine isomorphism $L: \Sym(n) \to \Sym(n)$ has a form $LX = \mathcal{A}X + B$ for some linear isomorphism $\mathcal{A}: \Sym(n) \to \Sym(n)$ and $B \in \Sym(n)$. Since $L$ preserves $\VPM(n)$ and is smooth, it also maps the closure $\overline{\VPM(n)}$, and thus the $\partial \VPM(n) = \overline{\VPM(n)} \backslash\VPM(n)$ to itself $L(\partial{\VPM(n)}) = \partial{\VPM(n)}$. 
     
    Let us assume that $L(X) = 0$ for some $X \in \overline{\VPM(n)}$ and $X \not\in \{0, I\}$. First, we consider a case where $X = \alpha I$ for some $0 < \alpha < 1$. Take some $0 < \beta < \min(\alpha, 1-\alpha)$, therefore $\alpha I \pm \beta I \in \VPM(n)$, and compute:
    \[
    L(\alpha I \pm \beta I) = \mathcal{A}(\alpha I \pm \beta I) + B = \left(\mathcal{A}(\alpha I) + B\right) \pm \mathcal{A}\beta I = L(\alpha I) \pm \mathcal{A}\beta I = \pm \mathcal{A}\beta I \in \VPM(n)
    \]
    But having both $\mathcal{A}I \prec 0$ and $0 \prec \mathcal{A}I$ is impossible.
    
    Next, assume there exist at least two distinct eigenvalues $\lambda_1 < \lambda_2$ of $X$, with corresponding unit eigenvectors $v_1, v_2$. Let $U_\theta$ for $\theta \in (0, 2\pi]$ be a rotation matrix which rotates the subspace $V = \text{span}(v_1, v_2)$ by angle $\theta$, that is:
    \[
    U_\theta = \exp\left(\theta Z\right) \qquad Z = v_1v_2^T - v_2v_1^T
    \]
    Since $U_\theta \in O(n)$, both $\Sym(n)$ and $\PSD(n)$ are invariant under conjugation with $U_\theta$.
    
    We thus have a smooth curve $\eta: S^1 \to \Sym(n)$ given by $\eta(\theta) = L(U_\theta^T XU_\theta)$ lying inside $\overline{\VPM(n)} \subset \PSD(n)$, such that for $\theta = 0$, we have $\eta(0) = 0$. This means that $\eta'(0) = 0$, by~\Cref{lemma:curve-in-psd}. By standard computation, we have:
    \[
    \frac{d}{d\theta}U_\theta^TXU_\theta = U_\theta^T(XZ - ZX)U_\theta
    \]
    and therefore:
    \[
    \frac{d}{d\theta} \eta(\theta) = \mathcal{A}\left(U_\theta^T(XZ - ZX)U_\theta\right)
    \]
    Thus, $\mathcal{A}(XZ - ZX) = 0$, meaning $XZ - ZX = 0$. Unfolding:
    \[
    X(v_1v_2^T - v_2v_1^T) - (v_1v_2^T - v_2v_1^T)X = \lambda_1v_1v_2^T - \lambda_2 v_2v_1^T - \lambda_2v_1v_2^T + \lambda_1 v_2v_1^T = (\lambda_1 - \lambda_2)(v_1v_2^T + v_2v_1^T)
    \]
    But if $v_1v_2^T + v_2v_1^T$ is a non-zero symmetric matrix, implying $\lambda_1 = \lambda_2$, a contradiction.

    Thus, we can only have $L(0) = 0$ or $L(I) = 0$. Let us compute:
    \[
    L(0) + L(I) = \mathcal{A}0 + B + \mathcal{A}I + B = \mathcal{A}X + B + \mathcal{A}(I - X) + B = L(X) + L(I - X)
    \]
    Since $L$ is onto $\VPM(n)$ and a diffeomorphism, it is also onto $\overline{\VPM(n)}$, so for each $Y = L(X)$, we also have $L(I - X) = C - L(X) \in \overline{\VPM(n)}$, for $C := L(0) + L(1)$. Thus, substituting $Y = 0$, we get $C \in \overline{\VPM(n)}$, i.e. $C \preceq I$, and substituting $Y = I$, we get $C - I \in \overline{\VPM(n)}$, i.e. $I \preceq C$, forcing $I = C$. Thus, $L(0) + L(I) = I$, and if $L(0) = 0$, then $L(I) = I$. Otherwise, $L(0) = I$ and $L(I) = 0$, proving the thesis.
\end{proof}

Now, our major result is the following theorem.
\begin{theorem}[Classification of projectivizations of linear cone automorphisms]
    Given a linear cone automorphism $L: \Sym(n) \times \mathbb{R} \to  \Sym(n) \times \mathbb{R}$ for $C_n$, there exists $P \in O(n)$ and $\epsilon \in \{0, 1\}$ such that:
    \[
    [L(X, t)] = [P^T((1-\epsilon)X + \epsilon(I - X))P : 1]
    \]
\end{theorem}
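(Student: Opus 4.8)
The plan is to classify $L$ through its induced action on the extreme rays and face lattice of the closed cone $\overline{C_n}=\{(Y,\tau)\st 0\preceq Y\preceq \tau I\}$, reading off the projective formula only at the very end. First I would identify the extreme rays of $\overline{C_n}$: the extreme points of the matrix interval $\{X\st 0\preceq X\preceq I\}$ are exactly the orthogonal projections $P=P^2=P^\top$ (any eigenvalue strictly between $0$ and $1$ lets one perturb $X$ by $\pm\epsilon\,uu^\top$ and stay inside), so the extreme rays are precisely $R_P=\bbRpos(P,1)$. Since a linear cone automorphism $L$ permutes these rays and preserves the whole face lattice of $\overline{C_n}$, the combinatorics of the $R_P$ is the invariant to exploit.

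The first—and I expect hardest—step is to show that $L$ preserves the pair of apex rays $\{R_0,R_I\}$ corresponding to $[0:1]$ and $[I:1]$. The maximal proper faces of the spectrahedral cone $\overline{C_n}$ come in two families indexed by lines $\bbR v$: namely $G_A(v)=\{(Y,\tau)\st Yv=0\}$ and $G_B(v)=\{(Y,\tau)\st Yv=\tau v\}$. Computing incidences, $R_P\subset G_A(v)$ iff $v\in\range(P)^\perp$ and $R_P\subset G_B(v)$ iff $v\in\range(P)$; hence the maximal faces through $R_0$ and through $R_I$ each form a single irreducible $\mathbb{RP}^{n-1}$ (all of one type), whereas for every other projection the incident maximal faces split into two families of dimension at most $n-2$. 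As $L$ preserves maximal faces and their incidence with extreme rays, it preserves this distinguished property, so $\{R_0,R_I\}$ maps to itself. (Equivalently, $0$ and $I$ are the only boundary points at which $\overline{\VPM(n)}$ has a full-dimensional $\pm\PSD(n)$ corner; this is the cone-level analogue of \Cref{lemma:affine-maps-vpm}.) Composing $L$ with the complement automorphism $(Y,\tau)\mapsto(\tau I-Y,\tau)$, which induces $X\mapsto I-X$ on the slice and swaps $R_0\leftrightarrow R_I$, I may assume $L$ fixes both apex rays; this choice is exactly the parameter $\epsilon\in\{0,1\}$. Rescaling $L$ by a positive constant (harmless for $[L]$) I normalise $L(0,1)=(0,1)$.

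Writing $L(Y,\tau)=(\mathcal{A}Y+\tau B,\ \phi(Y)+c\tau)$, the normalisation gives $B=0$ and $c=1$, while $L(I,1)\in R_I$ forces $\mathcal{A}I=\beta I$ and $\phi(I)=\beta-1$ for some $\beta>0$; thus $L$ is block-triangular with $\mathcal{A}$ invertible. Because $L$ preserves $\overline{C_n}$, the lower constraint yields $\mathcal{A}(\PSD(n))\subseteq\PSD(n)$ (scale any $Y\succ0$ into $\VPM(n)$ and read off the first coordinate of $L$), and the same applied to $L^{-1}$ gives $\mathcal{A}(\PSD(n))=\PSD(n)$. Hence $\mathcal{A}$ is a linear automorphism of the $\PD(n)$ cone, so by \Cref{lemma:UTU} $\mathcal{A}X=U^\top X U$ with $U\in\GL(n,\bbR)$; the relation $U^\top U=\mathcal{A}I=\beta I$ then shows $U=\sqrt{\beta}\,P$ with $P\in O(n)$, i.e. $\mathcal{A}X=\beta P^\top X P$.

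It remains to kill the projective part $\phi$ and pin $\beta=1$, where the finish becomes clean. The slice map is $g(X)=\mathcal{A}X/(\phi(X)+1)=\beta P^\top X P/(\phi(X)+1)$, and $g$ must carry extreme points of $\overline{\VPM(n)}$ to extreme points. For a rank-one projection $vv^\top$ one computes $g(vv^\top)=\tfrac{\beta}{\phi(vv^\top)+1}\,ww^\top$ with $w=P^\top v$ a unit vector; a positive multiple $\mu ww^\top$ is a projection only for $\mu=1$, forcing $\phi(vv^\top)=\beta-1$ for every unit $v$. Since the $vv^\top$ span $\Sym(n)$ and $\phi$ is linear, this gives $\phi(Y)=(\beta-1)\tr(Y)$; comparing with $\phi(I)=\beta-1$ yields $(\beta-1)(n-1)=0$, so $\beta=1$ and $\phi=0$ because $n\geq2$. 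Therefore $g(X)=P^\top X P$, and undoing the optional complement composition gives $[L(X,\tau)]=[P^\top((1-\epsilon)X+\epsilon(I-X))P:1]$, as claimed. The one genuinely delicate point throughout is the apex-preservation step; everything after it is forced.
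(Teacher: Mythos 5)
Your proof is correct, and it takes a genuinely different route from the paper's. The paper pins down the apex rays analytically: it proves (\Cref{lemma:affine-maps-vpm}) that any affine automorphism of $\VPM(n)$ fixes or swaps $\{0,I\}$, using a smooth-curve derivative argument inside the PSD cone (\Cref{lemma:curve-in-psd}) applied to rotation orbits $U_\theta^\top X U_\theta$, and it then kills the linear-functional part of $L$ by trace identities plus an eigenvalue-positivity argument before invoking \Cref{lemma:UTU}. You instead pin down the apexes combinatorially, via the facial structure of $\overline{C_n}$: extreme rays are the rays over projections, maximal proper faces form the two families $G_A(v)$, $G_B(v)$, and $R_0$, $R_I$ are the only extreme rays whose incident maximal faces form a single connected $(n-1)$-dimensional family. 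Your endgame is also cleaner than the paper's: once $\mathcal{A}=\beta\, P^\top(\cdot)P$ is known, preservation of the extreme rays $R_{vv^\top}$ forces $\phi(vv^\top)=\beta-1$ for every unit $v$, and linearity of $\phi$ together with $n\geq 2$ gives $\beta=1$, $\phi=0$; this replaces the paper's fiddlier derivation that $\tr(C)=1-s$ and that $C$ has no negative eigenvalue. What your approach buys is conceptual clarity and, as a by-product, the facial characterization of $\VPM(n)$ (something the paper explicitly defers to future work); what the paper's approach buys is that it needs only calculus and linear algebra, with no appeal to the face lattice.

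Two points in your write-up need more care, though neither is fatal. First, the claim that $G_A(v)$, $G_B(v)$ exhaust the maximal proper faces should be justified; it follows from the standard fact that every face of an intersection of convex cones is an intersection of faces of the two cones, applied to $\PSD(n)\times\bbR$ and its image under $(Y,\tau)\mapsto(\tau I-Y,\tau)$, which parameterizes faces by pairs of orthogonal subspaces $(V_0,V_1)$ and makes the maximal ones those with $\dim V_0+\dim V_1=1$. Second, and more importantly, your distinguishing invariant is topological (dimension and irreducibility of the family of incident maximal faces), so you need that $L$ acts on the space of maximal faces as a homeomorphism; this is true because $L$ is a linear homeomorphism and the parameterization $v\mapsto G_{A/B}(v)$ is continuous in both directions on the relative interiors of the faces, but it is exactly the delicate point you flag and should be argued, not just asserted. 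Alternatively you can avoid topology entirely with a purely incidence-theoretic invariant: $R_0$ and $R_I$ are the only extreme rays $R$ for which exactly one other extreme ray shares no maximal face with $R$ (for $R_0$ that ray is $R_I$), whereas for $R_P$ with $0<\mathrm{rank}(P)<n$ every projection $Q$ satisfying $\range(Q)\cap\range(P)=\{0\}$ and $\ker(Q)\cap\ker(P)=\{0\}$ — e.g.\ $Q=I-P$ and its generic perturbations — yields such a ray, and there are infinitely many of them. Being preserved by any bijection that respects incidence, this property transfers under $L$ with no further argument.
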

\begin{proof}
A linear cone automorphism $L: \Sym(n) \times \mathbb{R} \to \Sym(n) \times \mathbb{R}$ is, in general, of a form:
\[
L(X, t) = (\mathcal{A}X + tB, \langle C, X \rangle + st) =(\mathcal{A}X + tB, \text{tr}(CX) + st)
\]
for some linear $ \mathcal{A}: \Sym(n) \to \Sym(n)$ and $B, C \in \Sym(n)$ and $s \in \mathbb{R}$. 
Because $L$ maps the origin $(0, 0)$ to itself, and maps straight lines to straight lines, it has to map rays through the origin to rays through the origin. Thus, it is an isomorphism on each projective slice. That is, for each fixed $t_0 \in \bbRpos$, we get an affine isomorphism $\pi \circ L(\cdot, t_0) \circ \iota_{t_0}: \Sym(n) \to \Sym(n)$, where $\pi(X, t) = \frac{1}{t}X$ is the projectivization and $\iota_{t_0}(X) = (X, t_0)$ is an inclusion. After applying~\Cref{lemma:affine-maps-vpm}, we get that either $L(I, t) = (I, t')$ and $L(0, t) = L(0, t'')$ or $L(I, t) = (0, t')$ and $L(0, t) = (I, t'')$ for some $t', t'' \in \bbRpos$. By possibly precomposing with a map $(X, t) \mapsto (It - X, t)$, we can ensure that $L(0, \bbRpos) = (0, \bbRpos)$ and $L(I, \bbRpos) = (I, \bbRpos)$. Setting $X = 0$, we have:
\[
L(0, t) = (\mathcal{A}0 + tB, \text{tr}(C0) + st) = (tB, st)
\]
which means $B = 0$. Therefore, we can write simpler formula for the inverse:
\[
L^{-1}(X, t) = \left(\mathcal{A}^{-1}X, -\frac{\text{tr}(C\mathcal{A}^{-1}X)}{s} + \frac{t}{s}\right)
\]
Next, substituting $X = I$, and remembering that $L(I, \bbRpos) = (I, \bbRpos)$, we have:
\begin{align*}
L(I, t) &= \left(\mathcal{A}I, \text{tr}(CI) + st\right) = \left(I, -\text{tr}(C) + \frac{t}{s}\right)\\
L^{-1}(I, t) &= \left(\mathcal{A}^{-1}I, -\frac{\text{tr}(C\mathcal{A}^{-1}I)}{s} + \frac{t}{s}\right) = \left(I, -\frac{\text{tr}(C)}{s} + \frac{t}{s}\right)
\end{align*}
From the fact that $\VPM(n)$ is mapped into itself, and the fact that $L$ is a linear isomorphism and thus a diffeomorphism, we get that the closure $\overline{\VPM(n)}$, and thus the boundary $\VPM(n) \backslash \overline{\VPM(n)}$, is preserved by both $L$ and $L^{-1}$. Therefore:
\begin{align*}
I \preceq (\text{tr}(C) + s)I &\implies 1 -s\leq \tr(C) \\
I \preceq \left(-\frac{\text{tr}(C)}{s} + \frac{1}{s}\right)I &\implies 1 - s \geq \tr(C)
\end{align*}
where we also used the fact that $s \geq 0$, from the fact that $L$ preserves the positivity of rays $(0, \bbRpos)$. Therefore, we conclude that $1 - s = \text{tr}(C)$. 

Now, we note that since we're interested only in the equivalence class $[L]$, we can assume w.l.o.g. that $s = 1$:
\[
[L] = \left[\mathcal{A}X, \text{tr}(CX) + st\right] = \left[\frac{\mathcal{A}}{s}X, \text{tr}\left(\frac{C}{s}X\right) + t\right]
\]
and this combined with the previous point, sets $\text{tr}(C) = 0$. Finally, assume that $C$ has at least one negative eigenvalue $\lambda$, with the unit eigenvector $v$. Set $X = v v^T$, and compute:
\[
\tr(CX) = \text{tr}(C v v^T) = \text{tr}(\lambda v v^T) = \text{tr}(\lambda v v^T I) = \lambda \text{tr}(\lambda v^T v) =\lambda ||v||_2 = \lambda < 0
\]
Now, pick some $0 < t_0 < -\lambda$, and compute:
\[
L(t_0X, t_0) = (t_0\mathcal{A}X, \lambda - t_0)
\]
making the second component negative, which is a contradiction with the positivity of rays. This means that all eigenvalues $\lambda^C_i$ of $C$ must be non-negative, and thus:
\[
0 = \text{tr}(C) = \sum_i \lambda^C_i \implies \lambda_i^C = 0
\]
So far, the above argument proved that:
\[
[L(X, t)] = [\mathcal{A}X, t]
\]
Now, using~\Cref{lemma:UTU} and the fact that \[
L(\PD(n), \bbRpos) = (\PD(n), \bbRpos)
\]
which holds because every element of $Y \in \PD(n)$ can be scaled to a point in $\VPM(n)$ by a non-negative factor $\frac{1}{\lambda_{\max}(X) + 1}$, we have $[L(X, t)] = [P^TXP, t]$. This proves the statement of the theorem, since we have been possibly precomposing $L$ with $I - X$, and:
\[
    P^T(I - X)P = P^TP - P^TXP = I - P^TXP
\] which used the fact that $\mathcal{A}I = I$, i.e. that $P$ must be orthonormal.

\end{proof}

\begin{theorem}\label{thm:isochar}
The group of isometries of $\VPM(n)$ for $n > 1$ is generated by conjugation by orthonormal matrix $X \mapsto U^\top XU$ and inversion $X \mapsto I - X$.
\end{theorem}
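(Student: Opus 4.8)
The plan is to combine the three pillars already in place: the identification $\Isom(\VPM(n)) = \Coll(\VPM(n))$ from Corollary~\ref{prop:isometries-are-collineations}, the reduction of collineations of $\VPM(n)$ to linear cone automorphisms of $C_n$ via the fundamental theorem of projective geometry, and the explicit classification of those automorphisms proved in the preceding theorem. For the reverse inclusion I would simply cite Proposition~\ref{prop:ImX} and Proposition~\ref{prop:invariance-conjugation}, which already show that the two named maps are isometries.

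First I would fix an arbitrary $\phi \in \Isom(\VPM(n))$. By Corollary~\ref{prop:isometries-are-collineations}, $\phi$ is a collineation of $\VPM(n)$, and as an isometry it is in particular a continuous bijection. Viewing $\VPM(n)$ as the open subset $U = \{[X:1]\}$ of $\mathbb{RP}^N$ with $N = n(n+1)/2$ through the projective chart of Definition~\ref{def:vpm-cone}, I would note that $\phi$ sends each triple of collinear points to a collinear triple, hence sends each set $L \cap U$ (a segment, for $L$ a projective line) to a set all of whose triples are collinear, i.e.\ to a collinear set. Since $n \geq 2$ gives $N \geq 3$, Proposition~\ref{lemma:collineations-extend} then applies and yields a global collineation $\hat{\phi} : \mathbb{RP}^N \to \mathbb{RP}^N$ with $\hat{\phi}|_U = \phi$.

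Next I would invoke the fundamental theorem of projective geometry (Theorem~\ref{thm:fundamental}) to write $\hat{\phi} = [L]$ for a linear isomorphism $L : \Sym(n) \times \bbR \to \Sym(n) \times \bbR$. The point requiring care is to promote $L$ to a genuine linear cone automorphism of $C_n$: because $\hat{\phi}$ preserves the open set $\mathbf{P}(C_n) = \VPM(n)$ and $L$ is a homeomorphism, $L(C_n)$ is an open cone whose set of rays coincides with that of $C_n$; since $C_n$ is pinned down by its projectivization together with the half-space $t > 0$ on which it lives, normalizing the sign of $L$ so that it preserves $\{t > 0\}$ forces $L(C_n) = C_n$. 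With $L$ now a linear cone automorphism for $C_n$, the classification theorem applies and furnishes $P \in O(n)$ and $\epsilon \in \{0,1\}$ with $[L(X,t)] = [P^\top((1-\epsilon)X + \epsilon(I-X))P : 1]$.

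Finally I would read this off on the slice $t = 1$: for $\epsilon = 0$ we get $\phi(X) = P^\top X P$, a conjugation by an orthonormal matrix, and for $\epsilon = 1$ we get $\phi(X) = P^\top(I-X)P = I - P^\top X P$, the composition of the inversion $X \mapsto I-X$ with such a conjugation. Thus every isometry lies in the group generated by $X \mapsto U^\top X U$ for $U \in O(n)$ and $X \mapsto I - X$; conversely both generators are isometries, so the isometry group is exactly this generated group. I expect the middle paragraph to be the main obstacle, since everything downstream hinges on the abstract linear lift $L$ provided by Theorem~\ref{thm:fundamental} actually preserving the cone $C_n$ (and not merely its projectivization up to sign), which is precisely the hypothesis the classification theorem requires; the extension step and the final bookkeeping are routine by comparison.
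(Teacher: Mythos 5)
Your proposal is correct and follows essentially the same route the paper intends: the paper states Theorem~\ref{thm:isochar} as the direct assembly of Corollary~\ref{prop:isometries-are-collineations}, Proposition~\ref{lemma:collineations-extend}, Theorem~\ref{thm:fundamental}, the classification of linear cone automorphisms of $C_n$, and Propositions~\ref{prop:ImX} and~\ref{prop:invariance-conjugation} for the reverse inclusion. If anything, your middle paragraph (normalizing the sign of $L$ so that it maps $C_n$ to $C_n$ rather than $-C_n$) makes explicit a step the paper leaves implicit, which is a welcome addition rather than a deviation.
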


\section{Discussion with some perspectives}\label{sec:concl}

We motivated this work by showing that the parameter space of the extended Gaussian family is a base-to-base closed bicone called the variance-precision model.
We then consider the Hilbert geometry for the corresponding interior of this parameter space, reported the closed-form formula of its corresponding Hilbert metric distance, and fully characterize the isometries.
We use two transformations $T_1(Q)=Q(I+Q)^{-1}$ and $T_2(Q)=(I+Q)^{-1}$ which maps the open domains, from PD to VPM, 
In the context of extended Gaussians, $T_1$ is the covariance transformation  and $T_2$ the precision transformation.
Those transformations are related by the $T_1(Q^{-1})=T_2(Q)$ and $T_2(Q^{-1})=T_1(Q)$.
For non-degenerate matrices $\Sigma=P^{-1}\Leftrightarrow P=\Sigma^{-1}$, eigenvalues of the covariance matrix $\Sigma$ and its corresponding precision matrix $P$ are reciprocal to each others: That is, it holds that
$\lambda_i(\Sigma)=\frac{1}{\lambda_{n+1-i}(P)}>0$ for $i\in\{1,\ldots,n\}$.
However, when some eigenvalues of $\Sigma$ or $P$ vanish, say $\lambda_i(\Sigma)=0$ or $\lambda_j(P)=0$, we cannot associate a corresponding $+\infty$ eigenvalue in the corresponding ``dual matrix'' (indeed, doing so will end up with an undefined matrix).
This observation explains the doubling of the PSD cone boundary enclosing the VPM when considering both covariance and precision degeneracies.

Coincidentally, the well-known affine invariant Riemannian metric (AIRM) distance~\cite{pennec2006riemannian} obtained by taking the Riemannian trace metric in $\PD$ with length element $\ds_Q^2(\dQ)=\tr\left((Q^{-1}\dQ)\right)$  at $Q\in\PD$  was calculated in James' same paper~\cite{VarianceInfoMfd-1973}. 
(James considered the Fisher metric of Wishart distributions to compute the Fisher-Rao distance.)
The closed-form formula for the AIRM distance between $Q_1,Q_2\in\PD(n)$ requires the full set of eigenvalues:
$$
\rho(Q_1,Q_2)=\sqrt{\sum_{i=1}^n\log^2 \lambda_i(Q_1Q_2^{-1})}.
$$
This Riemannian distance is invariant by inversion $\rho(Q_1^{-1},Q_2^{-1})=\rho(Q_1,Q_2)$ and  congruence transformation by invertible matrices
$\rho(A Q_1 A^\top,A Q_2 A^\top)=\rho(Q_1,Q_2)$ for all $A\in\GL(n)$.
Hilbert VPM distance requires only four extreme eigenvalues to be computed.
Let us notice that Hilbert VPM distance being only invariant under orthonormal matrix congruence transformation and not for general invertible matrix congruence transformation may be of interest in applications like in Diffusion Tensor Imaging~\cite{arsigny2006log} (DTI).

Table~\ref{tab} summarizes the features of the Hilbert VPM distance versus the AIRM distance.

\begin{table}\centering
\caption{Comparison of the AIRM vs Hilbert VPM distances.}\label{tab}

\begin{tabular}{lcc}
 & AIRM distance & Hilbert VPM distance\\ \hline
Eigenvalues & $\lambda_i(Q_1Q_2^{-1})$ & $\lambda_1(Q_1Q_2^{-1}), \lambda_n(Q_1Q_2^{-1})$\\
& & $\lambda_1(\Mob(Q_1,Q_2)), \lambda_n(\Mob(Q_1,Q_2))$\\
& & $\Mob(Q_1,Q_2)=(I-Q_2)^{-1}(I-Q_1)$\\
invariance under inversion & \checkmark & \xmark \\
 & & replaced by Identity-complement invariance \\
invariance under $\GL(n)$ congruence & \checkmark & \xmark\\
invariance under $O(n)$ congruence & \checkmark & \checkmark \\ \hline
\end{tabular}
\end{table} 

One interesting property is that we can enlarge the $\barVPM$  by defining:
$$
\barVPM_\eps=\{ -\eps\, I\preceq X \preceq (1+\eps)\,I\},\quad \eps\geq 0.
$$
See Figure~\ref{fig:GaussianVPMeps2}. Let $d_{H,\eps}$ denote the Hilbert distance induced by $\barVPM_\eps$.

\begin{figure}%
\centering
\begin{tabular}{ccc}
\includegraphics[width=0.28\columnwidth]{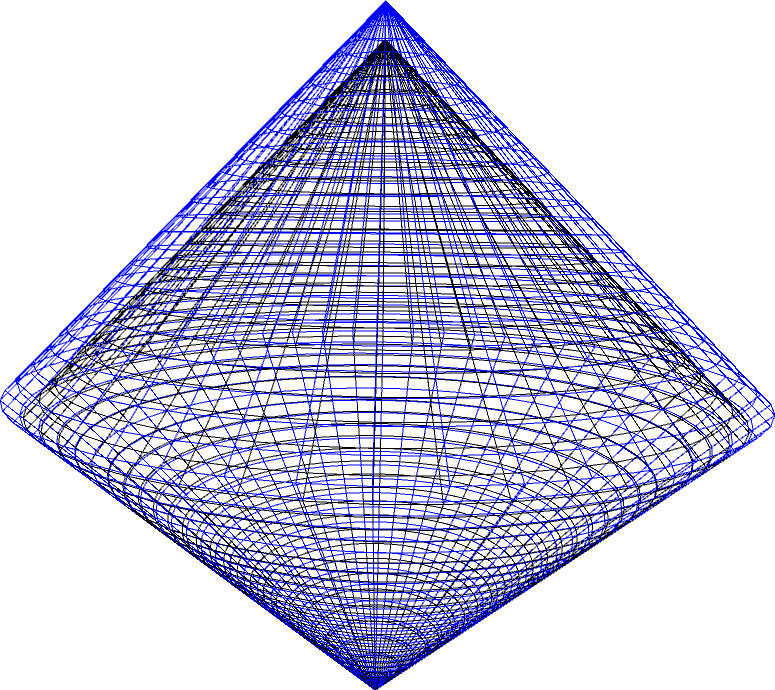}&
\includegraphics[width=0.28\columnwidth]{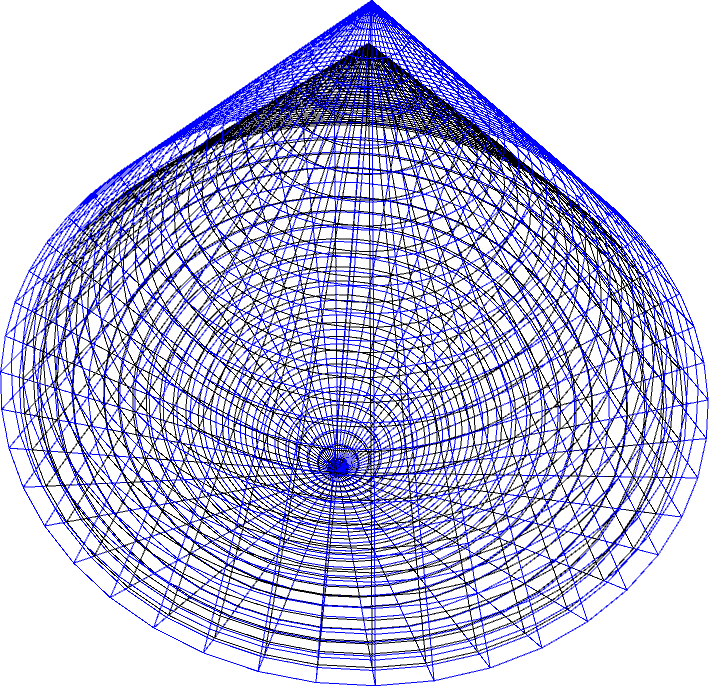}&
\includegraphics[width=0.28\columnwidth]{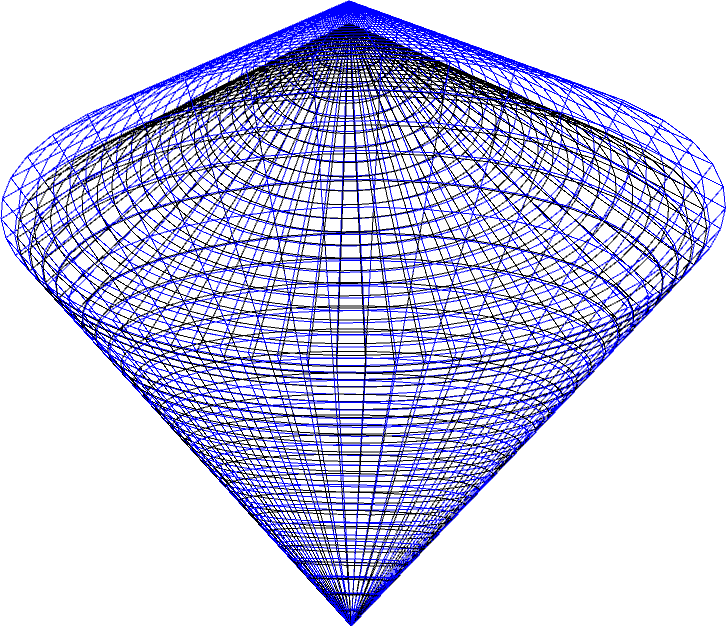}
\end{tabular}

\caption{The parameter space of $\barVPM_\eps(2)$ (viewed as a Lorentz bicone in $\bbR^3$): Three views of the boundary of the  Lorentz bicone with $\barVPM_\eps$ (in blue) encapsulating $\barVPM$ (in black).}%
\label{fig:GaussianVPMeps2}%
\end{figure}

We have the following monotonicity property Hilbert distances defined by nested $\VPM_\eps$ domains~\cite{KleinHilbertPoincare-1999,nielsen2020siegel}:
 
\begin{proposition}[Lower bounding Hilbert VPM distance]
$$
\forall \eps>0, \forall S_1,S_2\in\VPM(n), d_H(S_1,S_2) \geq d_{H,\eps}(S_1,S_2).
$$
\end{proposition}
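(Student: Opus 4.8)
The plan is to read off the claim from the definition of the Hilbert distance as a cross-ratio along the line through the two points, combined with the elementary fact that enlarging the convex domain pushes the two boundary intersection points outward, which can only decrease the cross-ratio. This is the standard monotonicity of Hilbert metrics under domain inclusion, specialized to the nested family $\VPM(n) \subseteq \VPM_\eps(n)$, where $\VPM_\eps(n)$ denotes the open interior of $\barVPM_\eps(n)$ on which $d_{H,\eps}$ is defined.

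First I would record the two structural facts. From the definitions, $0 \preceq X \preceq I$ forces $-\eps I \prec X \prec (1+\eps) I$ for every $\eps > 0$, so in fact $\overline{\VPM(n)} \subset \VPM_\eps(n)$ and in particular $\VPM(n) \subseteq \VPM_\eps(n)$. I would also note that $\VPM_\eps(n)$ is open, bounded, and convex by exactly the argument of~\Cref{prop:vpm-open-convex}: it is again an intersection of two shifted copies of the open cone $\PD(n)$, namely $\{X \succ -\eps I\} \cap \{X \prec (1+\eps) I\}$, so that $d_{H,\eps}$ is indeed a well-defined Hilbert distance.

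Next, fix $S_1, S_2 \in \VPM(n)$ and let $l$ be the line they span. Both $l \cap \VPM(n)$ and $l \cap \VPM_\eps(n)$ are open segments (by convexity and boundedness), and the inclusion of domains yields the inclusion of chords $l \cap \VPM(n) \subseteq l \cap \VPM_\eps(n)$. Writing $\{x', y'\} = \partial \VPM(n) \cap l$ and $\{x'', y''\} = \partial \VPM_\eps(n) \cap l$, the four boundary points then appear on $l$, after fixing an orientation, in the order $x'', x', S_1, S_2, y', y''$; that is, each $\eps$-endpoint lies no closer to the segment $[S_1, S_2]$ than the corresponding $\VPM$-endpoint. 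I would then carry out the one-dimensional cross-ratio monotonicity: introducing an affine coordinate $s_0 < s_1 < s_2 < s_3$ on $l$ for $(x', S_1, S_2, y')$, the Hilbert distance is $\log$ of the cross-ratio $\frac{(s_3 - s_1)(s_2 - s_0)}{(s_1 - s_0)(s_3 - s_2)}$, and a one-line derivative of each relevant factor shows it is increasing in $s_0$ and decreasing in $s_3$. Hence moving $x'$ outward to $x''$ (decreasing $s_0$) and $y'$ outward to $y''$ (increasing $s_3$) each decreases the cross-ratio; taking logarithms gives $d_{H,\eps}(S_1, S_2) \le d_H(S_1, S_2)$, which is the assertion.

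The only genuinely geometric step, and the one I would state most carefully, is the ordering of the boundary points: that the endpoints of the larger chord are never strictly closer to $[S_1, S_2]$ than those of the smaller chord. This follows immediately from the nestedness of the chords once one knows they are honest intervals, which is guaranteed by convexity and boundedness; everything else is the routine cross-ratio computation. One could alternatively route the whole argument through the Birkhoff characterization of~\Cref{prop:birkoff-hilbert-match}, but the direct cross-ratio argument is the most transparent and makes the monotonicity entirely explicit.
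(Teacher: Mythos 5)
Your proof is correct. Note that the paper itself supplies no argument for this proposition at all: it simply invokes the monotonicity of Hilbert distances under nested domains and defers to the cited references. Your cross-ratio argument is precisely the standard proof of that monotonicity, so you have filled in exactly what the paper leaves to citation, and each step checks out: $0 \preceq X \preceq I$ indeed forces $-\eps I \prec X \prec (1+\eps)I$, so $\overline{\VPM(n)} \subset \VPM_\eps(n)$; the domain $\VPM_\eps(n) = \left(-\eps I + \PD(n)\right) \cap \left((1+\eps)I - \PD(n)\right)$ is open, convex, and bounded by the same argument as \Cref{prop:vpm-open-convex}, so $d_{H,\eps}$ is well defined; and the one-variable monotonicity is exactly right, since the factor $\frac{s_2 - s_0}{s_1 - s_0}$ has derivative $\frac{s_2 - s_1}{(s_1 - s_0)^2} > 0$ in $s_0$ while $\frac{s_3 - s_1}{s_3 - s_2}$ has derivative $\frac{s_1 - s_2}{(s_3 - s_2)^2} < 0$ in $s_3$, so pushing either endpoint outward can only decrease the cross-ratio. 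Your identification of the chord-ordering step $x'', x', S_1, S_2, y', y''$ as the one genuinely geometric ingredient, guaranteed by nestedness of the chords, is also the right emphasis.
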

We check that when $S_1,S_2\in\partial\barVPM$, we have $d_H(S_1,S_2)=+\infty$ but $d_{H,\eps}(S_1,S_2)<+\infty$.
Thus enlarging $\VPM$ allows us to define a Hilbert distance $d_{H,\eps}$ which now admits boundedness for degenerate covariance and/or precision matrices.
This may prove useful in a number of scenario involving algorithms dealing with a mix of degenerate covariance and precision matrices.

Next, we may consider the full Gaussian family $\{N(\mu,\Sigma) \st \mu\in\bbR^n, \Sigma\in\PD(n) \}$ instead of zero-centered Gaussians by embedding  full Gaussian distributions into the SPD cone of higher dimension following the Calvo-Oller embedding~\cite{EllipticalDistance-CalvoOller-2002}:
$$
(\mu,\Sigma) \mapsto  \in \Sigma^+_\mu= \mattwotwo{\Sigma+\mu\mu^\top}{\mu}{\mu^\top}{1}\subset\PD(n+1).
$$
We then map $\Sigma^+_\mu$ into $\VPM(n+1)$ using $T_1$ transformation. 

Having straight lines being geodesics allow one to implement easily computational geometric primitives.
For example, we may extend the Badoiu and Clarkson iterative geodesic-cut algorithm for approximating the smallest enclosing ball~\cite{badoiu2003smaller,arnaudon2013approximating} (SEB).
Figure~\ref{fig:HilbertVPMSEB} reports a fine approximation of the Hilbert VPM SEB.  

We may also consider the VPM subdomain of correlation matrices, potentially degenerate.
For the SPD cone $\PD(n)$, the subdomain of correlation matrices is called an elliptope, an open bounded convex domain whose simplicial facial structure was studied in~\cite{tropp2018simplicial} and corresponding Hilbert geometry investigated in~\cite{nielsen2018clustering}.

We leave these extensions as well as the facial characterizations of $\VPM$ and  other related results for the next revision of this manuscript.

\begin{figure}%
\centering
\includegraphics[width=0.5\columnwidth]{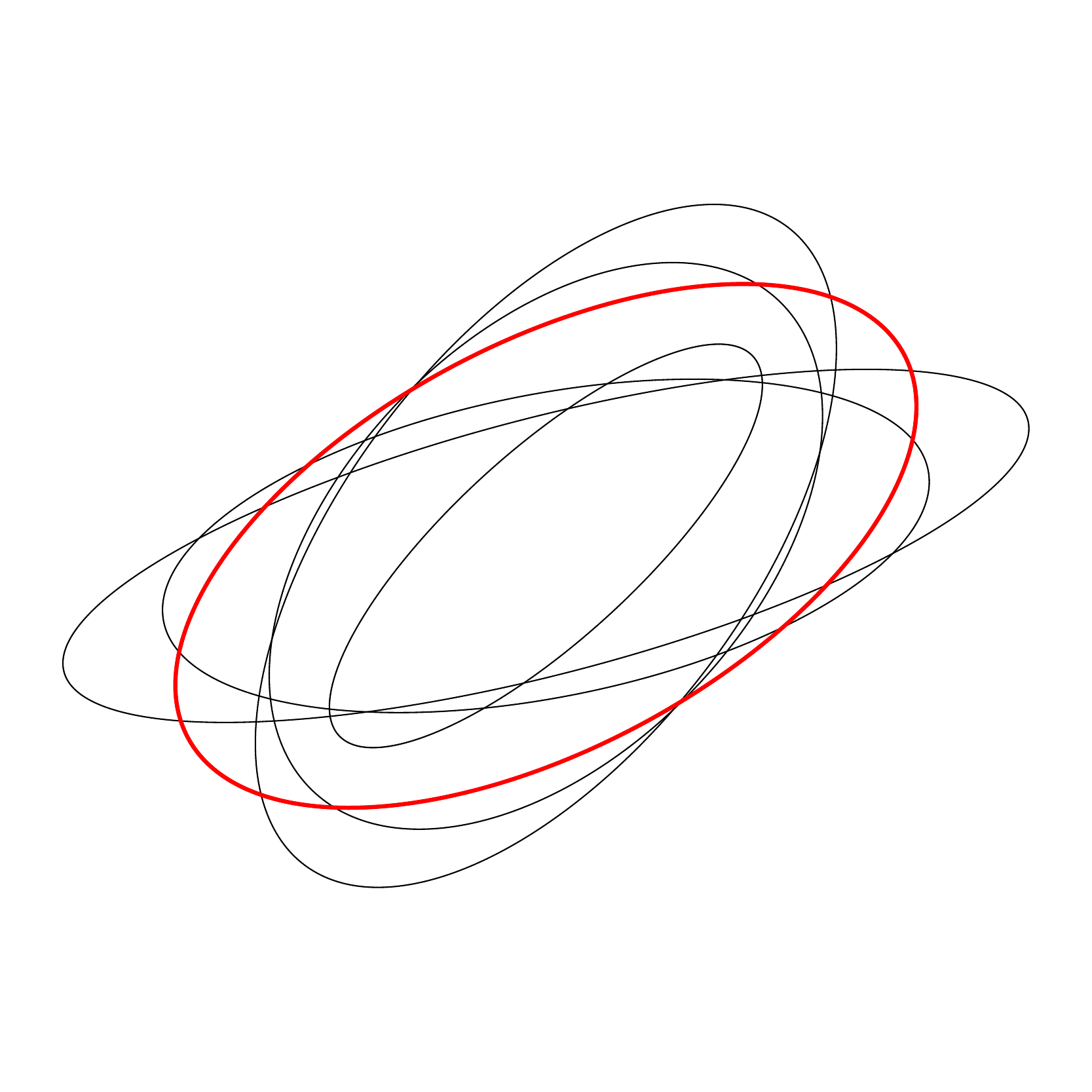}
\caption{Fine approximation of the smallest enclosing ball (SEB, in red) with respect to the Hilbert VPM distance in 2D of a set of 2D SPD matrices (or equivalently, 2D  zero-centered Gaussians, in black).}\label{fig:HilbertVPMSEB}
\end{figure}

\bibliographystyle{plain}
\bibliography{VPMBIB.bib}

\end{document}